\def\BibTeX{{\rm B\kern-.05em{\sc i\kern-.025em b}\kern-.08em
    T\kern-.1667em\lower.7ex\hbox{E}\kern-.125emX}}
\newcommand{\nb}{\bar{n}_\mathrm{B}}
\newcommand{\ns}{\bar{n}_\mathrm{s}}
\newcommand{\im}{j}
\DeclareMathOperator{\tr}{tr}
\newcommand{\ketbra}[2]{\ket{#1}\bra{#2}}
\newcommand{\pfail}{p_\text{f}}
\newcommand{\bobiqubit}{\hat{\tau}_{R\check{B}_i}}
\theoremstyle{definition}
\newtheorem{definition}{Definition}[section]
\newtheorem{theorem}{Theorem}
\newtheorem{lemma}{Lemma}
\theoremstyle{remark} \newtheorem{remark}{Remark}
\begin{document}

\title{Covert Entanglement Generation over Bosonic Channels
\thanks{This paper was presented in part at the IEEE International Conference on Quantum Computing and Engineering (QCE) in Montr\'{e}al, QC, Canada, September 2024 \cite{anderson2024covert-qce} and at the IEEE International Symposium on Information Theory (ISIT) in Ann Arbor, MI, USA, June 2025 \cite{anderson2025covert-isit}.}
\thanks{This work was supported by the National Science Foundation under Grants No. CCF-2006679 and EEC-1941583,  
 the Israel Science Foundation under Grants No. 939/23 and 2691/23, German-Israeli Project Cooperation (DIP) within the Deutsche Forschungsgemeinschaft (DFG) under Grant No.
2032991, Ollendorff Minerva Center (OMC) of the Technion No. 86160946, and by the Junior Faculty Program
for Quantum Science and Technology of Israel Planning and Budgeting Committee of the Council for Higher Education (VATAT) under Grant No. 86636903. 

}}

\author{Evan J.~D.~Anderson$^*$\orcidlink{0000-0002-0894-1695}, Michael S.~Bullock$^*$\orcidlink{0000-0002-3528-7473}, Ohad Kimelfeld\orcidlink{0009-0007-8398-0719}, Christopher K. Eyre\orcidlink{0009-0006-8755-8868}, Filip Rozp\k{e}dek\orcidlink{0000-0002-2755-4623},\\ Uzi Pereg\orcidlink{0000-0002-3259-6094}, and Boulat A.~Bash\orcidlink{0000-0002-1205-3906}
\thanks{*These authors contributed equally to this work.}
\thanks{Evan Anderson is with the Wyant College of Optical Sciences, University of Arizona
Tucson, AZ, USA
(email: ejdanderson@arizona.edu)}
\thanks{Michael Bullock is with Dept.~of Electrical and Computer Engineering, University of Arizona, Tucson, AZ, USA (email: bullockm@arizona.edu) }
\thanks{Ohad Kimelfeld is with the Physics Department and Helen Diller Quantum Center, Technion -- Israel Institute of Technology, Haifa, Israel (email: ohad.kim@campus.technion.ac.il) }
\thanks{Christopher Eyre is with the Dept.~of Mathematics, Brigham Young University
Provo, UT, USA }
\thanks{Filip Rozp\k{e}dek is with the College of Information \& Computer Sciences, University of Massachusetts, Amherst, MA, USA (email: frozpedek@cics.umass.edu)}
\thanks{Uzi Pereg is with the Electrical and Computer Engineering Department and Helen Diller Quantum Center, Technion -- Israel Institute of Technology, Haifa, Israel (email: uzipereg@technion.ac.il)}
\thanks{Boulat Bash is with Dept.~of Electrical and Computer Engineering, and the Wyant College of Optical Sciences, University of Arizona, Tucson, AZ, USA (email: boulat@arizona.edu)}}

\maketitle

\begin{abstract}
We explore covert entanglement generation over the lossy thermal-noise bosonic channel, which models quantum-mechanically  many practical channels, including optical, microwave, and radio-frequency (RF). Covert communication ensures that an adversary cannot detect the presence of transmissions, which are concealed in channel noise. 
We show a \emph{square root law} (SRL) for covert entanglement generation similar to that for classical communication: $L_{\rm EG}\sqrt{n}$ entangled bits (ebits) can be generated covertly and reliably over $n$ uses of a bosonic channel.  We report a single-letter expression for optimal $L_{\rm EG}$ as well as an achievable method. We additionally analyze the performance of covert entanglement generation using single- and dual-rail photonic qubits, which may be more practical for physical implementation.
\end{abstract}

\section{Introduction}

Standard communication security protects the transmission content from unauthorized access using cryptography \cite{menezes96HAC} or information-theoretic secrecy \cite{bloch11pls}.
On the other hand, covert, or low-probability-of-detection/intercept (LPD/LPI), signaling prevents detection of the transmission in the first place. Over the last decade, the fundamental limits of covert communication were explored for classical  \cite{bash12sqrtlawisit, bash13squarerootjsac,bloch15covert, wang15covert} and classical-quantum channels \cite{bash15covertbosoniccomm, bullock20discretemod, gagatsos20codingcovcomm,  azadeh16quantumcovert-isitarxiv, bullock2025fundamentallimitscovertcommunication, bullockCovertCommunicationClassicalQuantum2023, zlotnick25eacqcovert}. 
Covert communication over these channels is governed by the \textit{square root law} (SRL): $L_{\rm c}\sqrt{n}$ covert bits are reliably transmissible over $n$ channel uses for a channel-dependent constant $L_{\rm c}>0$ called covert classical channel capacity, and has units $\text{bit}/\sqrt{\text{channel use}}$. In the typical case when $n=TW$, the time-bandwidth product of a transmission, $L_{\rm c}$ is measured in $\text{b}/\sqrt{\text{s}\times\text{Hz}}$.
A tutorial \cite{bash15covertcommmag} and a detailed survey \cite{chen23covcommssurvey} provide an overview of these results and their developments.

To date, most covert communication efforts have focused on classical data, i.e., transmission of bits. 
Motivated by the importance of quantum communication discussed next, in this paper, we explore its covert variant. Specifically, we explore covert entanglement generation, the process of establishing entanglement between Alice and Bob while remaining undetected by an adversarial quantum-powerful warden named Willie.
Here, we investigate covert entanglement generation over lossy thermal-noise bosonic channels, which we call ``bosonic channels'' for brevity and formally define in Section \ref{sec:systemmodel}. 
They provide a quantum-mechanical description of optical-fiber, free-space-optical (FSO), microwave, and radio-frequency (RF) communication channels. 
Notably, unlike standard qubits, which are finite-dimensional, the quantum states of light on which these channels act are infinite-dimensional. 

\subsection{Motivation}
\label{sec:motivation}
Quantum information processing \cite{nielsen00quantum, wilde16quantumit2ed} offers a significant advantage over classical methods by harnessing the quantum-mechanical principles of superposition and entanglement.
For example, in sensing, these allow substantial enhancement of measurement precision 
in interferometry, magnetometry, and timing applications 
\cite{degen17quantumsensing, pirandola18quantumsensing}. In computing, these can achieve exponential speedup in quantum algorithms such as Shor's algorithm \cite[Ch. 5]{nielsen00quantum}. In communication, entanglement enables quantum key distribution (QKD) \cite{E91}, as well as increases classical capacity \cite[Ch. 21]{wilde16quantumit2ed} \cite{giovannetti03broadband,bennettEntanglementAssistedClassicalCapacity1999}.
In fact, entanglement assistance improves on the SRL in covert communication, allowing reliable transmission of $\propto\sqrt{n}\log{n}$ rather than $\propto \sqrt{n}$ covert bits \cite{gagatsos20codingcovcomm, zlotnick25eacqcovert}. 

Quantum-enabled devices derive these advantages by acting on quantum states directly.
Quantum communication \cite{nielsen00quantum, wilde16quantumit2ed, wehner18quantuminternet} enables distributing quantum operations spatially, further increasing their capability.
Quantum information theory \cite{wilde16quantumit2ed} generalizes classical \cite{cover02IT} by including an underlying quantum-mechanical description of a physical communication channel.
We illustrate the power of quantum communication by considering the basic channel that transmits qubits, or two-dimensional quantum systems, from Alice to Bob as in \cite[Ch.~24]{wilde16quantumit2ed}.
By Holevo's theorem \cite{holevo73bound}, \cite[Ch.~11] {wilde16quantumit2ed}, at most $\propto n$ classical bits can be transmitted reliably via $n$ uses of this channel by Alice encoding them in $n$ qubits and Bob measuring his output. 
Now, suppose Alice must transmit an arbitrary $n$-qubit quantum state.
First, consider transmitting its classical description.
Alice needs to perform quantum state tomography by measuring $\propto 4^n$ copies of $n$ qubits, record $\propto 4^n$ classical bits that characterize their quantum state \cite{jamesMeasurementQubits2001}, and transmit them so that Bob can reconstruct it.
This is complicated because these $n$ qubits can be in an arbitrary superposition and entangled.
Furthermore, the no-cloning theorem of quantum mechanics \cite[Sec.~3.5.4]{wilde16quantumit2ed} precludes replicating unknown quantum states, rendering tomography ineffective when only a few copies are available, such as in many quantum sensing scenarios.
Direct transmission of quantum states, thus, is not only a convenience but also a necessity.
Quantum teleportation enables this through quantum entanglement and classical communication between Alice and Bob: to teleport the aforementioned $n$-qubit quantum state, Alice and Bob need to generate $n$ maximally-entangled qubit pairs (known as ``entangled bits'' or ebits) and Alice needs to transmit $2n$ classical bits \cite[Sec.~1.3.7]{nielsen00quantum} \cite[Sec.~6.5.3]{wilde16quantumit2ed}.

Thus, entanglement is a cornerstone of quantum information processing. 
Enabling covert quantum computing and sensing and improving covert classical communication therefore motivate our investigation of covert entanglement distribution.

\subsection{Prior Work, Initial Conference Results, and Research Gap} \label{sec:prior-work}
Given the progress in covert classical communication \cite{bash12sqrtlawisit, bash13squarerootjsac,bloch15covert, wang15covert, bash15covertbosoniccomm, bullock20discretemod, gagatsos20codingcovcomm, azadeh16quantumcovert-isitarxiv, bullock2025fundamentallimitscovertcommunication, bullockCovertCommunicationClassicalQuantum2023, zlotnick25eacqcovert, bash15covertcommmag, chen23covcommssurvey},  the extension to covert quantum communication is natural.
The initial study \cite{arrazolaCovertQuantumCommunication2016}, as well as the follow-on work \cite{tahmasbi19covertqkd,tahmasbi20bosoniccovertqkd-jsait, tahmasbi20covertqkd}, focus on covert QKD. That is, \cite{arrazolaCovertQuantumCommunication2016, tahmasbi19covertqkd, tahmasbi20bosoniccovertqkd-jsait, tahmasbi20covertqkd} seek to generate a classical secret between Alice and Bob covertly using classical and quantum channel resources, analogous to a well-known non-covert approach \cite{E91}.

Although the SRL for direct covert transmission of quantum states is discussed in \cite{arrazolaCovertQuantumCommunication2016}, covert quantum communication remained underexplored until our conference papers \cite{anderson2024covert-qce, anderson2025covert-isit}.
There, we investigate quantum communication between parties that already possess covert classical resources, such as a pre-shared classical secret and a covert classical channel. 
In \cite{anderson2024covert-qce}, we confirm the SRL and derive the lower (achievable) and upper (converse) bounds on the covert quantum capacity of bosonic channels.
Although we improve the lower bound in \cite{anderson2025covert-isit}, it does not match the converse in \cite{anderson2024covert-qce}, which we posit is too loose.

While QKD protocols \cite{E91} enable generating classical secrecy from shared entanglement, conversely, shared classical secret can be used for generating entanglement \cite{devetakPrivateClassicalCapacity2005, hayden2008decoupling}.
Quantum states held by Alice and Bob that are maximally entangled are uncorrelated with any other quantum states, including those in the environment (in fact, entanglement decoherence refers to correlation ``leakage'' to the environment). Alice and Bob's classical shared secret is similarly uncorrelated with any state, classical or quantum, held by the adversary. Treating the environment as an adversary, Alice and Bob can use a classical secret to construct a quantum code that allows them to employ a quantum channel to generate entanglement \cite{devetakPrivateClassicalCapacity2005, hayden2008decoupling}. 
This, combined with prior work on covert classical communication over finite-dimensional classical \cite{bloch15covert} and quantum channels \cite{azadeh16quantumcovert-isitarxiv, bullock2025fundamentallimitscovertcommunication, anderson2025covert-isit} inspires a companion paper \cite{kimelfeld25ceg}  to this article  that takes a different approach from \cite{anderson2024covert-qce, anderson2025covert-isit} and derives the fundamental limit of covert entanglement generation over finite-dimensional quantum channels.
We note that the achievability and converse proofs in \cite{kimelfeld25ceg} yield identical upper and lower bounds, thus completely characterizing the covert entanglement generation capacity of these channels.
Such covertly generated entanglement can be used with a covert classical channel to teleport quantum states between Alice and Bob, as described in Section \ref{sec:motivation}.

While finite-dimensional qubit channels have many applications (most notably in quantum computing), bosonic channels that model many physical channels are infinite-dimensional.
The corresponding dichotomy between discrete and  continuous variables often leads to dramatically different results.
For example, in classical deterministic identification, the scaling is $\propto 2^{nR}$ for discrete variables, and $\propto 2^{R n\log n }=n^{nR}$ for continuous variables \cite{salariseddigh22ID}.
Another example is the dramatic difference between quantum error correction techniques for qubit \cite[Ch.~10]{nielsen00quantum} \cite{gottesmanIntroductionQuantumError2009} and bosonic channels  \cite{gottesmanEncodingQubitOscillator2001}.
Thus, the results in \cite{kimelfeld25ceg} cannot simply be applied to our scenario and must be re-derived for the bosonic channels.
Furthermore, current and near-term quantum channel codes and modulation schemes must be analyzed for supporting covertness.

\subsection{Our Contributions}\label{sec:intro-contributions}

We report an SRL similar to classical communication for covert entanglement generation over the bosonic channels: $L_{\rm EG}\sqrt{n}$ covert entangled qubit pairs (ebits) can be generated over $n$ channel uses for a channel-dependent constant $L_{\rm EG}>0$ called covert entanglement-generation capacity. 
We characterize the optimal value by providing 1) an upper bound on $L_{\rm EG}$, and 2) a method to generate $L_{\rm EG}\sqrt{n}$ covert ebits using a classical secret that is pre-shared between communicating parties. 
Our approach adapts the covert entanglement generation method for finite-dimensional channels from \cite{kimelfeld25ceg} to infinite-dimensional bosonic channels.
This non-trivial derivation builds on the results in \cite{bullock20discretemod, gagatsos20codingcovcomm, wang22isitcoverttd, wang22isittowardtdjournal}.
In fact, the covert entanglement generation capacity $L_{\rm EG}=L_{\text{no-EA}}$, the unassisted covert classical capacity derived in \cite{gagatsos20codingcovcomm}, which agrees with the finite-dimensional result in \cite{kimelfeld25ceg}.
We note that, although $L_{\rm EG}$ and $L_{\text{no-EA}}$ are numerically the same, the units for the former are $\text{ebit}/\sqrt{\text{channel use}}$, while, for the latter, they are $\text{bit}/\sqrt{\text{channel use}}$. This reflects the operational differences between classical and quantum transmission described in Section \ref{sec:motivation}. 
Concurrent with covert entanglement generation, we obtain an optimal-rate covert and information-theoretically secure classical communication scheme.
This prevents leakage of information contained in the transmission in the rare event that it is detected.

Remarkably, our formula for $L_{\rm EG}$ has a single-letter form, while only the bounds for the non-covert quantum capacity of a bosonic channel are currently known \cite{pirandolaFundamentalLimitsRepeaterless2017, noh2018capacitybounds, noh20energyconstrainedquantumcomm, fanizza21quantumcap} 
(The importance of single-letterization in quantum information theory is highlighted in \cite[Section VI.B]{Pereg2023pCommunication}).
The non-covert entanglement generation capacity upper bounds the quantum capacity because entanglement and classical communication enable teleportation of quantum states.
However, in covert quantum communication, the classical channel must also be covert, necessitating careful analysis, which we defer to future work.

Furthermore, the complexity of our optimal entanglement-generation scheme makes its realization using known components extremely challenging.
Thus, we explore covert entanglement generation using sub-optimal but more practical single- and dual-rail photonic qubits.
These qubit modulation schemes are used in many quantum information processing tasks, including cluster-state generation \cite{thomasEfficientGenerationEntangled2022}, entanglement distribution in quantum networks \cite{takedaDeterministicQuantumTeleportation2013, guhaRateLoss2015, hensenLoopholefreeBellInequality2015, krutyanskiyEntanglementTrappedIonQubits2023,   dhara2023entangling, azuma2023repeatersurvey}, and quantum key distribution (QKD) \cite{Honjo08qkd, scarani09rmpQKD}.
We report a significant gap relative to our optimal scheme, motivating further investigation.

The rest of this paper is organized as follows: next, we state the mathematical preliminaries as well as the system and channel models. In Section~\ref{sec:primary-results}, we provide the ultimate limits of covert entanglement generation. In Section \ref{sec:practical}, we investigate covert entanglement generation using single- and dual-rail photonic qubit modulation. In Section~\ref{sec:performance}, we compare the performance of our methods. We conclude in Section \ref{sec:conclusion} by discussing the implications of our results and areas of future research.  Appendices contain our mathematical proofs.

\section{Preliminaries}
\label{sec:preliminaries}

\subsection{Notation} \label{sec:notation}
\subsubsection{Linear algebra and quantum mechanics}
We employ standard notation for quantum information processing, as found in \cite{wilde16quantumit2ed}, \cite[Ch.~2.2.1]{tomamichel15finiteresourcesQIP}, and \cite{hayashi2006quantum}. We use hats for operators, e.g., $\hat{X}$ is the Pauli-X operator, and $\hat{\rho}_{A^n}$ is the $n$-mode normalized density operator (matrix) that belongs to system $A$. Similarly, $\ket{\cdot}$ is a column vector representing a quantum state, while $\bra{\cdot}$ is its conjugate transpose. These vectors live in a Hilbert space of finite or infinite dimension, and the distinction is made explicit or by context.

\subsubsection{Random variables}
We denote scalar random variables and their outcomes in upper- and lower-case, respectively (e.g., $X$ and $x$). We denote random vectors and their realizations in bold upper- and lower-case letters (e.g., $\mathbf{X}$ and $\mathbf{x}$).

\subsubsection{Asymptotics}
We use the standard asymptotic notation \cite[Ch. 3.1]{clrs2e}, where 
\begin{align}
\mathcal{O}(g(n)) &\triangleq \{f(n) : \exists m,n_0 >0 \notag \\ &\phantom{\triangleq } \text{ s.t. } 0\leq f(n) \leq m g(n) \text{ } \forall n \geq n_0\}\\
&=\left\{f(n) : \limsup_{n\to\infty}\left|\frac{f(n)}{g(n)}\right| <\infty\right\}\\
o(g(n))&\triangleq \{f(n) : \forall m>0 \text{,	 } \exists n_0 > 0 \notag \\ &\phantom{\triangleq }\text{ s.t. } 0\leq f(n) < m  g(n)\text{ } \forall n \geq n_0 \}\\
&=\left\{f(n) : \lim_{n\to\infty}\frac{f(n)}{g(n)} =0\right\}\\
\omega(g(n))&\triangleq \{f(n) : \forall m>0, \exists n_0 >0 \notag \\ &\phantom{\triangleq } \text{ s.t. }0\leq m g(n) < f(n)\text{ }\forall	n \geq n_0\}\\
&=\left\{f(n) : \lim_{n\to\infty}\left|\frac{f(n)}{g(n)}\right| = \infty\right\}.
\end{align}
Thus, $\mathcal{O}(g(n))$ denotes an asymptotically tight upper bound on $g(n)$, while $o(g(n))$ and $\omega(g(n))$ are non-asymptotically tight upper and lower bounds on $g(n)$, respectively.

\subsection{Channel and System Model}
\label{sec:systemmodel} 

\begin{figure}[htb]
\vspace{-0.15in}
\centering
\subfloat[]{%
    \includegraphics[width=0.42\columnwidth]{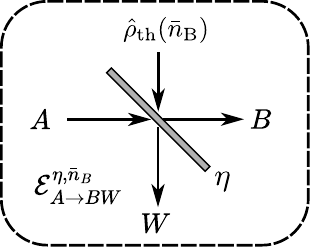}
    \label{fig:thermal-channel}
}
\hfill
\subfloat[]{%
    \includegraphics[width=0.38\columnwidth]{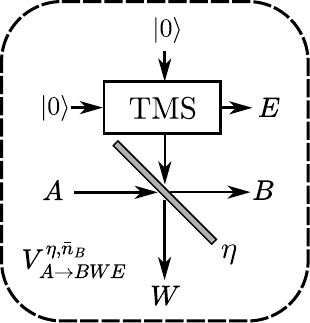}
    \label{fig:iso-channel}
}
\caption{Bosonic channel model. a) is the lossy thermal-noise bosonic channel $\mathcal{E}_{A\to BW}^{\eta,\nb}$, parameterized by $\eta\in[0,1]$ representing loss, and thermal state $\hat{\rho}_{\rm th}(\nb)$ with mean thermal photon number $\nb$ representing the channel noise. The channel has input subsystem at Alice $A$ and output subsystems $B$ and $W$ at Bob and Willie, respectively.  b) is the isometry $V_{A \to BWE}^{\eta,\nb}$ in the Stinespring dilation of the bosonic channel $\mathcal{E}_{A\to BW}^{\eta,\nb}$. The TMS block is a two-mode squeezer with gain $G=1+\bar{n}_B$ and $\ket{0}$ is a vacuum state. }
\label{fig:therm-iso-channels}
\end{figure}

Consider the \emph{lossy thermal-noise bosonic channel}, $\mathcal{E}_{A\to BW}^{\eta,\bar{n}_\textrm{B}}$ in  Fig.~\ref{fig:therm-iso-channels}\protect\subref{fig:thermal-channel}, which we call the bosonic channel for brevity.
It is described by a beamsplitter with transmittance $\eta \in [0,1]$, two input modes (transmitter Alice and thermal environment), and two output modes (legitimate receiver Bob and adversary warden Willie), where a mode is a quantum system corresponding to a single channel use. The parameter $\eta$ models channel photon loss.
Thermal-state input $\hat{\rho}_{\textrm{th}}({\nb})$ at the environment represents additive thermal noise. It has a photon-number (Fock) basis representation  $\hat{\rho}_{\textrm{th}}({\nb}) \equiv \sum_{k=0}^{\infty} \frac{\bar{n}_{\textrm{B}}^k}{(1+\bar{n}_{\textrm{B}})^{k+1}} |k\rangle\langle k|$, where $\nb$ is the mean thermal-environment photon number.
When Alice inputs state $\hat{\rho}_{A^n}$, 
Bob and Willie receive $\hat{\rho}_{B^n}\equiv\tr_{W
^n}\left( \mathcal{E}^{(\eta,\nb)\otimes n}_{A\to BW}(\hat{\rho}_{A^n})\right)$ and $\hat{\rho}_{W^n}\equiv\tr_{B^n}\left( \mathcal{E}^{(\eta,\nb)\otimes n}_{A\to BW}(\hat{\rho}_{A^n})\right)$, respectively. 
Isometric extension or Stinespring dilation represents a quantum channel as an isometry into a larger system containing the ancillary subsystem, followed by discarding this extra subsystem \cite[Ch.~5.2]{wilde16quantumit2ed}. 
Stinespring dilation of the bosonic channel $\mathcal{E}^{(\eta,\nb)\otimes n}_{A\to BW}$ in Fig.~\ref{fig:therm-iso-channels}\subref{fig:thermal-channel} yields an isometry $V^{\eta,\bar{n}_B}_{A\to BWE}$ shown in Fig.~\ref{fig:therm-iso-channels}\subref{fig:iso-channel}.
Thus, $\hat{\rho}_{B^n} \equiv\tr_{W^nE^n}\left( V^{(\eta,\nb)\otimes n}_{A\to BWE}(\hat{\rho}_{A^n})\right)$ and $\hat{\rho}_{W^n} \equiv\tr_{B^nE^n}\left( V^{(\eta,\nb)\otimes n}_{A\to BWE}(\hat{\rho}_{A^n})\right)$. We use $V^{\eta,\bar{n}_B}_{A\to BWE}$ in the proof of Theorem~\ref{thm:covert-capacity}.

\begin{figure*}[htb]
\centering
\includegraphics[width=0.9\textwidth]{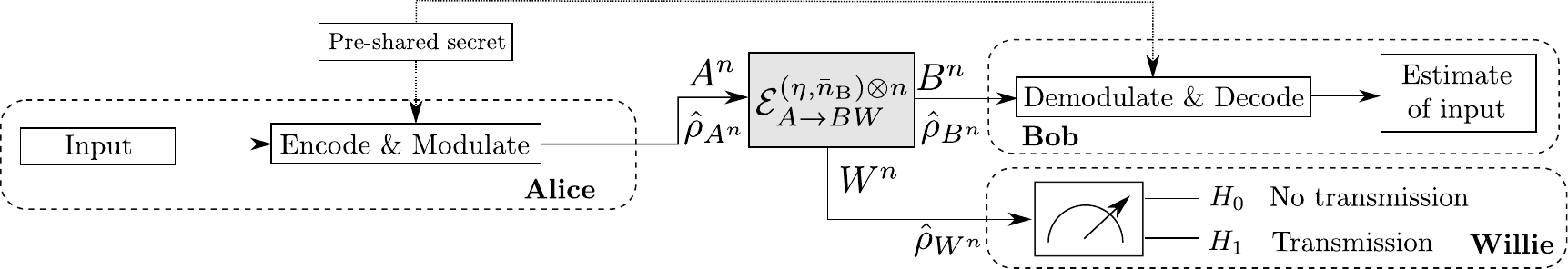}
\caption{System model for covert secrecy and covert entanglement generation. Alice either has an input and transmits, or she is quiet. When she has an input, she encodes and modulates a state $\hat{\rho}_{A^n}$ of system $A^n$ before transmitting it over $n$ uses of the bosonic channel $\mathcal{E}^{\eta,\nb}_{A\to BW}$ depicted in Fig.~\ref{fig:therm-iso-channels}\protect\subref{fig:thermal-channel}. Bob receives state $\hat{\rho}_{B^n}$, demodulates and decodes to estimate the input. Warden Willie receives state $\hat{\rho}_{W^n}$ and uses it to decide between hypotheses $H_0$ and $H_1$ corresponding to a quiet or transmitting Alice.
For covert secrecy, Alice uses position-based coding with a pre-shared secret key $k$, unknown to Willie, to encode the message $m$ into a coherent-state codeword modulated using quadrature phase-shift keying (QPSK). Bob employs sequential decoding with the pre-shared secret $k$ to estimate $m$.
For entanglement generation, Alice prepares a maximally-entangled state $\ket{\Phi}\bra{\Phi}_{RM}$. She encodes the state of system $M$ as a superposition of the codewords from the aforementioned secrecy codebook in system $A^n$. Bob constructs a coherent version of the sequential decoding scheme used in the secrecy construction to recover a state entangled with the reference system $R$.
}
\label{fig:system}
\end{figure*}
Now consider the covert communication setting described in Fig.~\ref{fig:system}.
Alice wishes to transmit a quantum state $\hat{\rho}_{A^n}$ to Bob over $n$ uses of the bosonic channel $\mathcal{E}_{A\to BW}^{\eta,\bar{n}_\textrm{B}}$ without being detected by Willie. For covert secret communication, Alice encodes a message $m$ into the $A^n$ subsystems via classical modulation, aiming to keep it covert and secret from Willie while allowing Bob to decode it reliably. Covertness, secrecy, and reliability are respectively defined in Sections \ref{sec:hypothesis-testing}, \ref{sec:secrecy}, and \ref{sec:reliability}. For covert entanglement generation, she instead creates local entanglement and maps half of the entangled state to the $A^n$ subsystems. Alice seeks to ensure Bob can reliably obtain half of an entangled state, where the other half is maintained locally by her. 

Finally, we assume that Alice and Bob pre-share a classical secret, as is standard in covert communications \cite{bash12sqrtlawisit, bash13squarerootjsac,bloch15covert, wang15covert, bash15covertbosoniccomm, bullock20discretemod, gagatsos20codingcovcomm, azadeh16quantumcovert-isitarxiv, bullock2025fundamentallimitscovertcommunication, bullockCovertCommunicationClassicalQuantum2023, zlotnick25eacqcovert, bash15covertcommmag, chen23covcommssurvey}.
This aligns with the ``best practices'' of secure system design, as the security relies solely on the shared secret \cite{kerckhoffs1883law, talb2006}. Although this may seem restrictive, in many practical settings the cost of detection can far outweigh that of distributing a secret. Furthermore, just as in the classical regime \cite{che13sqrtlawbscisit, bloch15covert}, a pre-shared key may be unnecessary when Alice’s channel to Bob is better than her channel to Willie. However, in practice, achieving such a channel advantage can be more challenging than pre-sharing a secret.

\subsection{Hypothesis Testing and Covertness}\label{sec:hypothesis-testing}
We assume that Willie has complete knowledge of the system in Fig.~\ref{fig:system}, except for Alice and Bob's pre-shared secret.
Willie must determine, from his channel output, whether Alice is transmitting (hypothesis $H_1$) or not (hypothesis $H_0$).

Let $\hat{\rho}_{W^n}^{(0)}\equiv\tr_{B}\left(\mathcal{E}_{A\to BW}^{(\eta,\bar{n}_\textrm{B})\otimes n}(\ket{0}\bra{0}_{A}^{\otimes n}
)\right)$ be the state Willie observes when Alice is quiet, where $\ket{0}\bra{0}_A$ is a vacuum state and is the ``innocent'' input. Similarly, we denote $\hat{\rho}_{W^n}$ as the state Willie receives when Alice transmits. When Alice is quiet, she inputs vacuum, so $\hat{\rho}^{(0)}_{W^n}=\left(\hat{\rho}^{(0)}_W\right)^{\otimes n}$ \, where $\hat{\rho}^{(0)}_W = \hat{\rho}_{\textrm{th}}(\eta\nb)$ is an attenuated thermal state \cite{weedbrook12gaussianQIrmp}.

Willie has to determine if Alice transmits. Thus, over $n$ channel uses, he attempts to distinguish between $\hat{\rho}_{W^n}$ and $\hat{\rho}^{(0)}_{W^n}$. The null and alternate hypotheses $H_0$ and $H_1$ correspond to Alice being quiet and transmitting, respectively. 
Willie is assumed to have access to arbitrary quantum resources to discriminate between $H_0$ and $H_1$, including fault-tolerant quantum computers, perfect quantum measurement, and ideal quantum memories. We also allow Willie to collect any photons that do not reach Bob.

For equal priors, i.e., $P(H_1)=P(H_0)=\frac{1}{2}$, Willie's  error probability is 
$P^{\rm (e)}_W=\frac{P_\textrm{FA}+P_\textrm{MD}}{2}$,
with probabilities of false alarm and missed detection $P_{\textrm{FA}}=P(\text{choose } H_1|H_0 \text{ true})$ and $P_\textrm{MD}=P(\text{choose } H_0|H_1 \text{ true})$, respectively. Willie achieves the trivial upper bound $P^{\rm (e)}_W\leq \frac{1}{2}$  by a random decision. Alice and Bob desire $\min P^{\rm (e)}_W$ close to that of this ineffective device: they seek
    $\min P^{\rm (e)}_W\geq\frac{1}{2}-\delta,$
where $\delta>0$ quantifies the covertness level. It is bounded by the trace distance between the output states under each hypothesis \cite[Sec.~9.1.4]{wilde16quantumit2ed}:
    $\min P^{\rm (e)}_W\geq\frac{1}{2}-\frac{1}{4}\left\|\hat{\rho}_{W^n}-\hat{\rho}^{(0)}_{W^n}\right\|_1$, 
where $\|\hat{A}\|_1\equiv\tr\left(\sqrt{\hat{A}^\dagger\hat{A}}\right)$ is the trace norm of $\hat{A}$ and $\hat{A}^\dagger$ is its Hermitian transpose. 
The trace distance is often mathematically unwieldy. Conveniently, the quantum relative entropy (QRE) 
 $D\left(\hat{\rho}\middle\|\hat{\sigma}\right) \equiv \tr\left(\hat{\rho}\log\hat{\rho} - \hat{\rho}\log\hat{\sigma}\right)$ is additive over product states, and upper bounds the trace distance via the quantum Pinsker's inequality \cite[Thm.~11.9.1]{wilde16quantumit2ed}: 
$\frac{1}{4}\left\|\hat{\rho}_{W^n}-\hat{\rho}^{(0)}_{W^n}\right\|_1 \leq \sqrt{\frac{1}{8} D\left(\hat{\rho}_{W^n}\middle\|\hat{\rho}^{(0)}_{W^n}\right)}$. 
We use QRE as our covertness criterion, as is standard in both classical \cite{bloch15covert, wang15covert} and quantum \cite{bullock20discretemod, gagatsos20codingcovcomm, azadeh16quantumcovert-isitarxiv, bullock2025fundamentallimitscovertcommunication} analyses.
Formally, we call a communication scheme \emph{covert} if, for $\delta_{\rm C}>0$, $D\left(\hat{\rho}_{W^n}\middle\|\hat{\rho}^{(0)}_{W^n}\right) \leq \delta_{\rm C}.$

\subsection{Secrecy}
\label{sec:secrecy}
Covertness yields a probabilistic guarantee.
Thus, in the unlikely event that Alice’s transmission is detected, Willie could potentially decode the message unless secrecy requirements are also met.
That is, we desire to bound the amount of information Willie gains if he detects Alice's state.
Given a message $m\in\mathcal{M}$, Alice prepares a quantum state $\hat{\rho}^{(m)}_{A^n}$ and transmits it through $n$ uses of a quantum channel $\mathcal{N}_{A\to BW}$. Willie recovers $\hat{\rho}_{W^n}^{(m)} = \tr_{B^n}(\mathcal{N}^{\otimes n}_{A\to BW}(\hat{\rho}_{A^n}^{(m)}))$. We call a system \emph{secret} if there exists a constant ``non-informative'' state $\hat{\breve{\rho}}_{W^n}$ that does not depend on Alice's original message, such that the leakage distance is $\delta_{\rm S}$-small:
$\max_{m\in\mathcal{M}}\| \hat{\rho}_{W^n}^{(m)} - \hat{\breve{\rho}}_{W^n}\|_1 \le \delta_{\rm S}$.

\subsection{Decoding Reliability}
\label{sec:reliability}
For transmission of classical information, we must ensure that Bob can reliably decode. A coding scheme $\mathcal{C}$ is reliable if, for any $\epsilon_{\rm c}>0$, 
    $\bar{P}_{\rm e}(\mathcal{C})\leq \epsilon_{\rm c}$, 
where $\bar{P}_{\rm e}(\mathcal{C})$ denotes the average probability of error over the message and key. 
We call an entanglement generation scheme $\mathcal{G}$ reliable if, for any $\epsilon_{\rm g}>0$, 
    $F\left(\hat{\Phi}_{RM},\hat{\tau}^{(\mathcal{G})}_{R\check{M}}\right)\geq 1-\epsilon_{\rm g}$, 
where $\hat{\Phi}_{RM}$ is the initial maximally-entangled state at Alice, $\hat{\tau}^{(\mathcal{G})}_{R\check{M}}$ is the entangled state Bob recovers, and $F\left(\hat{\rho}, \hat{\sigma} \right) \equiv \left[\tr\left( \sqrt{\sqrt{\hat{\rho}} \hat{\sigma}\sqrt{\hat{\rho}}}\right)\right]^2$ is the quantum fidelity between states $\hat{\rho}$ and $\hat{\sigma}$.

\section{Fundamental Limits on Covert Secrecy and Entanglement Generation}\label{sec:primary-results}

\begin{table}[h]
\centering
\caption{List of Symbols used in Section~\ref{sec:primary-results}}
\begin{tabular}{lp{5.4cm}}
\hline
\textbf{Symbol} & \textbf{Description} \\
\hline
$n$       & Total number of channel uses \\
$R_{\rm sec}$ & Achievable covert secrecy rate \\
$L_{\rm sec}$ & Covert secrecy capacity \\
$A, B, W, E, R$ & Quantum systems containing Alice, Bob, Willie, the environment, and reference states \\
$X$ & Random variable for public shared randomness \\
$\mathcal{M}, \mathcal{K}, \mathcal{C}$ & Sets of the message, key, and code, respectively \\
$m, k, c$ & Instances of the message, key, and code \\
$\delta_{\rm{S}}, \delta_{\rm{C}}$   & Secrecy and covertness constraint parameters \\
$\epsilon_{\rm c}, \epsilon_{\rm g}$      & Error parameters defined for reliable secrecy and entanglement generation  \\
$\ns$ & Mean photon number of Alice's transmission\\
$\{\hat{\Omega}_{B^n}^{(m,k)}\}$ & Bob's sequential-decoding POVM\\
$\hat{\mathcal{D}}_{B^n \to B^n \check{M} \check{K}_1 \check{K}_2}$ & Coherent version of $\{\hat{\Omega}_{B^n}^{(m,k)}\}$ \\ 
$\eta$        & Transmittance of the channel \\
$\nb$    & Mean thermal photon number of the channel \\
$\mathcal{E}^{\eta,\nb}_{A\to BW}$ & Bosonic channel parameterized by $\eta$ and $\nb$\\
$V^{\eta,\nb}_{A\to BWE}$ & Isometric extension of $\mathcal{E}^{\eta,\nb}_{A\to BW}$ \\
$c_{\rm{cov}}, c_{\rm{rel}}, c_{\rm{key}}$  & Covertness, reliability, and key-length constants\\ 
$\varsigma_n^{(i)}$ & Arbitrary bounding variables, $\varsigma_n^{(i)}\in o(1)\cap\omega(1/\sqrt{n})$\\
$\zeta_n^{(i)}$ &  Arbitrary bounding variables, $\zeta_n^{(i)}\in \omega(1/\sqrt{n})$\\
$\mu_n$ & Arbitrary bounding variable, $\mu_n \in o(1)$ \\
$\hat{\rho}_{\textrm{th}}(\nb)$ & Thermal state with mean thermal photon number $\nb$ \\
$\hat{\rho}^{(0)}_{W^n}$ & Willie's state when Alice does not transmit, $\hat{\rho}_{\textrm{th}}$$(\eta\nb)^{\otimes n}$ \\
$\hat{\rho}^{}_{W^n}$ & Willie's state when Alice transmits\\
$\hat{\breve\rho}_{W}$ & Willie's state when Alice transmits a uniform mixture of QPSK-modulated coherent states\\
$P_{\rm e}^{(m,k)}(\mathcal{C})$ & Decoding error probability for a code $\mathcal{C}$ defined by $m,k$ inputs\\
\hline
\end{tabular}
\label{tab:symbols}
\end{table}

Here, we derive the fundamental limits of covert entanglement generation over the bosonic channel.
We adapt the approach from \cite{kimelfeld25ceg} by adding a secrecy requirement to the existing results \cite{bullock20discretemod, gagatsos20codingcovcomm, wang22isitcoverttd, wang22isittowardtdjournal} and then convert the resulting code to its coherent version \cite[Sec.~5.4]{wilde16quantumit2ed}, \cite{devetakPrivateClassicalCapacity2005,hayden2008decoupling}. 

To adapt the covert secrecy proof to an infinite-dimensional channel, we first show that we may obtain secrecy via a sufficiently large position-based codebook \cite[Lem.~V.1]{wang22isittowardtdjournal}, and then use continuity of entropy under finite energy constraints \cite{winter16cont} to show that this also guarantees covertness. Importantly, the resulting non-informative state $\hat{\breve{\rho}}_W^{\otimes n}$ is also a quantum-secure covert state: it satisfies the covertness requirement $D\left(\hat{\breve{\rho}}_W^{\otimes n}\|\hat{\rho}_{W^n}^{(0)}\right)\leq\delta_{\rm C}$ from Section \ref{sec:hypothesis-testing} when input mean photon number per mode $\ns$ is controlled per \cite[Th.~2]{bullock20discretemod}.

However, adapting the covert entanglement generation proof to the bosonic channel poses different challenges. Since the bosonic channel is not isometric, we must use the channel's Stinespring dilation (Fig.~\ref{fig:therm-iso-channels}) and show that we can fully decouple our output state from both Willie and the environment. We obtain decoupling by employing a one-time pad, but this alone does not guarantee covertness if Willie's channel is superior to Bob's, since the resulting non-informative state is not a quantum-secure covert state. Thus, we ensure covertness by using an additional key, following established methods \cite[Th.~1]{gagatsos20codingcovcomm}, \cite[Lem.~V.1]{wang22isitcoverttd}.

For convenience, Table \ref{tab:symbols} lists commonly used symbols in this section and their descriptions.

\subsection{Covert and Secret Communication}\label{subsec:cov-sec-comms}

Systems that achieve the secrecy requirement in Section \ref{sec:secrecy} without the covertness guarantee of Section \ref{sec:hypothesis-testing} have rates in bits per channel use, expressed as $\frac{\log |\mathcal{M}|}{n}$ for a message set $\mathcal{M}$ and $n$ channel uses. In the covert setting, due to the SRL, $\log|\mathcal{M}|\in \mathcal{O}(\sqrt{n})$, yielding a zero-capacity result as $n\to\infty$. Hence, we define a covert secrecy rate as $R_{\rm sec} = \frac{\log |\mathcal{M}|}{\sqrt{n\delta_{\rm C}}}$.

\begin{definition}
(Achievable covert secrecy rate). A covert secrecy rate $R_{\rm sec}$ is \emph{achievable}, if for large enough number of channel uses $n$, and for every $\epsilon_{\rm c} \in (0,1), \delta_{\rm S} >0, \delta_{\rm C}>0$, there exists a $(e^{R_{\rm sec}\sqrt{n\delta_{\rm C}}}, n, \epsilon_{\rm c}, \delta_{\rm S}, \delta_{\rm C})$-code for secret and covert classical communication. 
\end{definition}

\begin{definition}\label{def:covseccap}
(Covert secrecy capacity). The covert secrecy capacity $L_{\rm sec}$ is the supremum over all achievable covert secret rates.
\end{definition}

\begin{theorem}\label{thm:covert-capacity-secrecy}
    The covert secrecy capacity of a bosonic channel $\mathcal{E}^{(\eta,\nb)}$ 
    is $L_{\rm sec}\left(\mathcal{E}^{(\eta,\nb)}\right) = c_{\rm cov}c_{\rm rel}$,
    where \begin{align}
        c_{\rm cov}&=\frac{\sqrt{2\eta\nb(1+\eta\nb)}}{1-\eta} \label{eq:ccov}\\
    c_{\rm rel}&=\eta\log\left(1+\frac{1}{(1-\eta)\nb}\right). \label{eq:crel}
    \end{align}
\end{theorem}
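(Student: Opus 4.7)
My strategy is to establish matching converse and achievability bounds. The converse is essentially free: a covert secret code is, \emph{a fortiori}, a covert classical code (the secrecy constraint never relaxes reliability or covertness), so $L_{\rm sec}\le L_{\text{no-EA}}$, where $L_{\text{no-EA}}=c_{\rm cov}c_{\rm rel}$ is the unassisted covert classical capacity of $\mathcal{E}^{(\eta,\nb)}$ established in \cite{bullock20discretemod,gagatsos20codingcovcomm}. The real content is to match this upper bound with an achievability construction that is additionally secret.

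\textbf{Achievability construction.} Building on the QPSK-modulated coherent-state scheme of \cite{bullock20discretemod,gagatsos20codingcovcomm} and following the position-based architecture sketched in Fig.~\ref{fig:system}, I would fix the per-symbol brightness $\bar{n}_n=\Theta(\sqrt{\delta_{\rm C}/n})$ so that the aggregate QRE at Willie over $n$ uses stays within $\delta_{\rm C}$ (via a second-order Taylor expansion of the QRE between the ensemble-averaged Willie state and the vacuum-driven thermal state $\hat{\rho}_{\rm th}(\eta\nb)$). I then draw an i.i.d.\ codebook of $|\mathcal{M}|\cdot|\mathcal{K}|$ length-$n$ QPSK codewords; on input $(m,k)$ Alice transmits $x^n(m,k)$, and Bob, given the pre-shared key $k$, restricts to the sub-codebook $\{x^n(m,k)\}_{m\in\mathcal{M}}$ and decodes via the sequential (or square-root) measurement used in \cite{bullock20discretemod,gagatsos20codingcovcomm}. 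Three conditions must be verified: covertness via the quantum Pinsker inequality applied to the averaged Willie state; reliability for $\log(|\mathcal{M}||\mathcal{K}|)\le\sqrt{n\delta_{\rm C}}\,c_{\rm cov}c_{\rm rel}(1-o(1))$ by the low-brightness Holevo analysis at Bob; and secrecy via the convex-split / soft-covering lemma, which certifies that $|\mathcal{K}|^{-1}\sum_k\hat{\rho}_{W^n}^{(x^n(m,k))}$ is trace-norm close to a fixed state $\hat{\breve{\rho}}_{W^n}$ for each $m$, provided $\log|\mathcal{K}|$ exceeds a Willie-side Holevo-type quantity.

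\textbf{The main obstacle} is to show that secrecy comes \emph{for free} in the covert regime, i.e., that the $|\mathcal{K}|$ demanded by the soft-covering step is small enough that $\log|\mathcal{M}|\sim\log(|\mathcal{M}||\mathcal{K}|)=c_{\rm cov}c_{\rm rel}\sqrt{n\delta_{\rm C}}(1-o(1))$. Unlike the standard wiretap setting whose net rate is $I(X;B)-I(X;W)$, here Willie's per-symbol Holevo information scales as $\bar{n}_n^2=O(1/n)$ (second order in brightness, because his thermal noise dominates and the QPSK ensemble is phase-symmetric), so $I(X^n;W^n)=O(1)$ whereas $I(X^n;B^n)=\Theta(\sqrt{n\delta_{\rm C}})$; this allows $\log|\mathcal{K}|$ to be chosen $o(\sqrt{n\delta_{\rm C}})$ without eroding the leading-order rate. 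Carrying out the argument rigorously in the infinite-dimensional bosonic setting—in contrast to the finite-dimensional treatment of \cite{kimelfeld25ceg}—requires a continuous-variable soft-covering bound coupled to the low-brightness QRE expansion of the displaced-thermal states at Willie, and is the step I expect to absorb most of the technical effort.
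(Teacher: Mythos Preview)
Your architecture --- QPSK coherent-state codewords, position-based coding with sequential decoding, and soft-covering over the key to enforce secrecy --- matches the paper's, and your converse is identical. But the paragraph you label \emph{the main obstacle} contains a real error that would derail the argument as written.

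You claim that Willie's per-symbol Holevo information $I(X;W)$ scales as $\bar n_s^2$ ``because the QPSK ensemble is phase-symmetric.'' Phase symmetry is what makes the \emph{covertness} QRE $D(\hat{\bar\rho}_W\|\hat\rho_W^{(0)})$ second order in $\bar n_s$: the first-order perturbation of the mixture $\hat{\bar\rho}_W$ relative to the thermal state $\hat\rho_W^{(0)}=\hat\rho_{\rm th}(\eta\nb)$ vanishes after averaging over the four phases. The Holevo quantity is a different object: $I(X;W)=S(\hat{\bar\rho}_W)-S(\hat\rho_W^{x})=S(\hat{\bar\rho}_W)-g(\eta\nb)$, since each conditional state is a displaced thermal of unchanged entropy. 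The Taylor expansion of $S(\hat{\bar\rho}_W)$ in $u=\sqrt{\bar n_s}$ (computed in \cite[App.~A]{gagatsos20codingcovcomm} and reused in Appendix~\ref{ap:keybound}) has a nonzero second derivative in $u$, i.e., a nonzero \emph{first-order} term in $\bar n_s$; concretely $I(X;W)\propto\log\!\big(1+\tfrac{1}{\eta\nb}\big)\bar n_s+o(\bar n_s)$. With $\bar n_s=\Theta(1/\sqrt n)$ this gives $nI(X;W)=\Theta(\sqrt n)$, not $O(1)$, so the soft-covering key is $\Theta(\sqrt n)$ --- the same order as $\log|\mathcal M|$.

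The resolution is not that the key is small but that in this model the pre-shared key is a \emph{free} resource: the rate is $\log|\mathcal M|/\sqrt{n\delta_{\rm C}}$ with no subtraction for $|\mathcal K|$. Since Bob holds $k$ and decodes only among the $|\mathcal M|$ codewords of the $k$-th sub-codebook, reliability requires merely $\log|\mathcal M|\lesssim nI(X;B)$; your stated condition $\log(|\mathcal M||\mathcal K|)\le c_{\rm cov}c_{\rm rel}\sqrt{n\delta_{\rm C}}$ is the wiretap constraint and is neither needed nor satisfiable here. The paper in fact takes $\log|\mathcal K|=\Theta(n)$ (because it enforces secrecy over $W^nE^n$ rather than just $W^n$, for downstream use in Theorem~\ref{thm:covert-capacity}) and still achieves $\log|\mathcal M|=(1-o(1))c_{\rm rel}\bar n_s n$. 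Had you truly needed the wiretap rate $I(X;B)-I(X;W)$, you would fall strictly short of the converse by a positive first-order term per mode.
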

\begin{IEEEproof}
\noindent{\bf Achievability:}\\
\noindent{\bf Construction:} Alice employs position based coding \cite{wilde17ppc} by generating public shared randomness $\left(X^n\right)^{\otimes |\mathcal{M}||\mathcal{K}|}$ and distributes copies to Bob and Willie, where $X$ is a uniform random variable over $\{1,\ldots,4\}$ and $X^n$ is an independent and identically-distributed (i.i.d.) random vector. Alice and Bob pre-share a key $k\in\mathcal{K}$ that is kept secret from Willie. Based on message $m$, Alice subselects codeword $X^n(m,k)$ indexed by $(m,k)$ and encodes it as a product of quadrature phase-shift keyed (QPSK) coherent states: $\ket{\phi\left(X^n(m,k)\right)}_{A^n}=\ket{\sqrt{\ns}e^{j\pi X_1/2 }}_{A_1}\otimes\cdots\otimes\ket{\sqrt{\ns}e^{j\pi X_n/2 }}_{A_n}$ where $\ns$ is the mean photon number per mode and $\pi X_i / 2$, $i=1,\ldots,n$, is the phase. She transmits the state over $n$ bosonic channel uses. Based on his knowledge of $k$, Bob uses the sequential-decoding positive operator-valued measurement (POVM) $\{\hat{\Omega}_{B^n}^{(m,k)}\}$ on his output systems $B^n$ \cite{wilde17ppc}. The following lemma yields that this scheme is reliable, covert, and secret, on average, for an appropriate choice of $(\mathcal{M},\mathcal{K})$:
\begin{lemma}\label{lem:secrecyach}
Consider the bosonic channel $\mathcal{E}^{\eta,\nb}_{A\to BW}$ described in Fig.~\ref{fig:therm-iso-channels}\subref{fig:thermal-channel} and let $\hat{\breve\rho}_{W}^{\otimes n}=\tr_{B^n}\left[\mathcal{E}^{\eta,\nb}_{A\to BW}((\hat{\rho}_A)^{\otimes n})\right]$ with $\hat{\rho}_A=\mathrm{tr}_X[\hat{\rho}_{XA}]$, where $\hat{\rho}_{XA} = \frac{1}{4}\sum_{x=1}^4 \ketbra{x}{x}_X \otimes \ketbra{\sqrt{\ns}e^{j\pi x/2}}{\sqrt{\ns}e^{j\pi x/2}}_A$. 
There exists a random coding scheme defined on QPSK coherent-state codeword input states with mean photon number per mode $\ns$ and $\varsigma_n\in o(1)\cap\omega(1/\sqrt{n})$ such that, for $\ns\in o(1)$ and for $n$ large enough, 
\begin{align}
    \log |\mathcal{M}| &= (1-\varsigma_n) c_{\rm rel}\ns n\\
    \log |\mathcal{K}| &= (1+\varsigma_n)c_{\rm key}\ns n\label{eq:keysizerandom}
\end{align}
where $c_{\rm rel}$ is defined in \eqref{eq:crel} and 
\begin{align}
    c_{\rm key}=( 1-\eta )\log\left(1+\frac{1}{\eta\nb}\right),\label{eq:cres}
\end{align}
while, letting $E_\mathcal{C}[\cdot]$ denote expectation over codebook $\mathcal{C}$, 
\begin{align}
    E_\mathcal{C}\left[\bar{P}_{\rm e}(\mathcal{C})\right] &\leq e^{-\varsigma_n^{(1)}\sqrt{n}} \label{eq:relcode} \\
    E_\mathcal{C}\left[\left|D(\hat{\bar\rho}_{W^n}\|\hat{\breve\rho}_W^{\otimes n})-D(\hat{\breve\rho}_W^{\otimes n}\|\hat{\rho}_{W^n}^{(0)})\right|\right] &\leq e^{-\varsigma_n^{(2)}\sqrt{n}}\label{eq:covertnesscode}\\
    \max_m E_\mathcal{C} \left[\left\|\hat{\bar{\rho}}^m_{W^n}-\hat{\breve\rho}_{W}^{\otimes n}\right\|_1\right]&\leq e^{-\varsigma_n^{(3)}\sqrt{n}}\label{eq:seccode}
\end{align}
 where, for each $i$, $\varsigma_n^{(i)}\in o(1)\cap\omega(1/\sqrt{n})$ and $\|\cdot\|_1$ is defined in Section \ref{sec:hypothesis-testing}.
\end{lemma}
To prove Lemma \ref{lem:secrecyach}, we adapt the position-based coding and sequential decoding strategy from \cite[Lem.~V.1]{wang22isitcoverttd},\cite{wilde17ppc} to ensure secrecy at the cost of a larger key requirement (Lemma \ref{lem:PBCSQ} in Appendix \ref{ap:oneshotproof}). 
We then show that our scheme is covert by setting the mean signal photon number per channel use as $\ns=c_{\rm cov}\sqrt{\frac{\delta}{n}}$ according to \cite{bullock20discretemod}.
The full proof of Lemma \ref{lem:secrecyach} is in Appendix \ref{ap:cov-sec-bosonic}.

Next, we show that a deterministic coding scheme exists that satisfies the average decoding reliability, covertness, and message average secrecy requirements. We then use the standard expurgation argument to construct bounds on the maximum probability of decoding error and achieve semantic secrecy. 
\begin{lemma}\cite[Lem.~10]{kimelfeld25ceg}\label{lem:expurge}
    There exists a sequence of deterministic coding schemes $\mathcal{C}=\{x^n(m,k)\}$ such that, for $\varsigma_n\in o(1)$ and $n$ large enough, 
\begin{align}
    \log |\mathcal{M}| &= (1-\varsigma_n) c_{\rm rel}\ns n\\
    \log |\mathcal{K}| &= (1+\varsigma_n) c_{\rm key}\ns n\label{eq:keysizedet}
\end{align}
    where 
    \begin{align}
      \max_{m,k} P_{\rm e}^{(m,k)}(\mathcal{C}) &\leq e^{-\zeta_n^{(1)}\sqrt{n}}\label{eq:detrelexp} \\
       \left|D(\hat{\bar\rho}_{W^n}\|\hat{\breve\rho}_W^{\otimes n})-D(\hat{\breve\rho}_W^{\otimes n}\|\hat{\rho}_{W^n}^{(0)})\right| \label{eq:detcovexp}&\leq e^{-\zeta_n^{(2)}\sqrt{n}}\\
    \max_m\left\|\hat{\bar{\rho}}^m_{W^n}-\hat{\breve\rho}_{W}^{\otimes n}\right\|_1&\leq e^{-\zeta_n^{(3)}\sqrt{n}} \label{eq:detsecexp}
\end{align}
with $\zeta_n^{(i)} \in \omega(1/\sqrt{n})$ for $i=1,2,3$.
\end{lemma}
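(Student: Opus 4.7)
The plan is to derandomize the ensemble of Lemma~\ref{lem:secrecyach} by a joint selection argument and then to expurgate in order to convert the average decoding error into a worst-case bound, following the template of \cite[Lem.~10]{kimelfeld25ceg}. Concretely, I would first apply Markov's inequality to each of the three random quantities that Lemma~\ref{lem:secrecyach} controls in expectation: the average decoding error (\ref{eq:relcode}), the covertness deviation (\ref{eq:covertnesscode}), and the message-maximum secrecy distance (\ref{eq:seccode}). The event that any one of these exceeds, say, four times its expectation has probability at most $1/4$, so by a union bound there is a positive probability that a random draw simultaneously satisfies all three within a constant multiple of their (exponentially small) expectations. This yields a deterministic codebook $\mathcal{C}^\ast$ meeting the covertness condition (\ref{eq:detcovexp}) and the per-message secrecy bound (\ref{eq:detsecexp}) with exponents $\zeta_n^{(2)},\zeta_n^{(3)}\in o(1)\cap\omega(1/\sqrt{n})$; the Markov constants are absorbed by the sub-exponential scaling of the $\varsigma_n^{(i)}$.

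Next, I would strengthen the average decoding error to the maximum bound (\ref{eq:detrelexp}) through expurgation. Sorting the $|\mathcal{M}||\mathcal{K}|$ codewords of $\mathcal{C}^\ast$ by individual error probability and discarding the worst half yields a code whose maximum error is at most twice the original average. To preserve the product structure of the message and key sets required by (\ref{eq:keysizedet}), I would expurgate symmetrically within each key partition so that the surviving message set has equal size across keys; the resulting reduction in $\log|\mathcal{M}|$ is $O(1)$ and is absorbed into the $\varsigma_n$ factor.

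The main obstacle is checking that expurgation does not spoil (\ref{eq:detcovexp}). The secrecy bound (\ref{eq:detsecexp}) is already a maximum over messages and can only improve upon discarding codewords, but the covertness quantity depends on the average codeword state $\hat{\bar\rho}_{W^n}$. Since expurgating at most half of the codewords can at most double each surviving codeword's weight in this average, $D(\hat{\bar\rho}_{W^n}\|\hat{\breve\rho}_W^{\otimes n})$ inflates by at most an additive $\log 2$; against the $e^{-\varsigma_n^{(2)}\sqrt{n}}$ scale of (\ref{eq:covertnesscode}), this constant is again absorbed into a slightly smaller but still admissible exponent $\zeta_n^{(2)}$. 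Combining the selection and expurgation steps produces the deterministic scheme of Lemma~\ref{lem:expurge}.
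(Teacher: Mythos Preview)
Your overall derandomize-then-expurgate strategy matches the paper's, but the step showing that expurgation preserves the covertness bound has a genuine gap. The claim that removing half the codewords inflates $D(\hat{\bar\rho}_{W^n}\|\hat{\breve\rho}_W^{\otimes n})$ by at most an additive $\log 2$ is not correct in general: doubling mixture weights does not translate into an additive $\log 2$ on the relative entropy of the mixture, and the increase can in fact be unbounded. More fatally, even granting such a bound, $\log 2$ is an $O(1)$ constant and cannot be ``absorbed into a slightly smaller exponent'' when the target is $e^{-\zeta_n^{(2)}\sqrt{n}}\to 0$; you would be left with a bound of order $\log 2$, not an exponentially small one. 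The paper does not attempt any direct QRE estimate after expurgation; it invokes the expurgation of \cite[Lem.~10]{kimelfeld25ceg} and observes that the $O(n)$ key scaling here (versus $O(\sqrt{n})$ there) keeps covertness intact. The mechanism is that once the per-message secrecy bound (\ref{eq:detsecexp}) holds on the surviving messages, convexity and monotonicity of trace distance give $\|\hat{\bar\rho}_{W^n}-\hat{\breve\rho}_W^{\otimes n}\|_1\le e^{-\zeta_n^{(3)}\sqrt{n}}$ for the \emph{expurgated} average as well (all QPSK codewords carry the same per-mode energy, so the trace against $\log\hat{\rho}_{W^n}^{(0)}$ still vanishes), and the entropy-continuity steps (\ref{eq:covbnd1})--(\ref{eq:samepower}) then re-deliver (\ref{eq:detcovexp}).

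A smaller issue concerns the secrecy derandomization: equation~(\ref{eq:seccode}) bounds $\max_m E_\mathcal{C}[\|\cdot\|]$, not $E_\mathcal{C}[\max_m\|\cdot\|]$, so a single Markov application to a ``message-maximum secrecy distance'' random variable is not justified---you would need a union bound over $|\mathcal{M}|$ events with only a constant tail each. The paper sidesteps this by derandomizing with the message-\emph{averaged} secrecy (which is dominated by (\ref{eq:seccode})), obtaining (\ref{eq:detsec}) first, and then promoting it to $\max_m$ through the same expurgation that upgrades the average decoding error to (\ref{eq:detrelexp}).
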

The proof of Lemma \ref{lem:expurge} is in Appendix \ref{ap:lemderandom}.

Now, we show that the message rate achieved for the deterministic coding scheme in Lemma \ref{lem:expurge} converges to the capacity while maintaining covertness:
 $\log |\mathcal{M}| \geq (1-\varsigma_n)\eta\log\left(1+\frac{1}{(1-\eta)\bar{n}_b}\right)\ns n$. Note that \eqref{eq:detrelexp} implies decoding reliability, \eqref{eq:detsecexp} implies secrecy, and \eqref{eq:detcovexp} implies     $D\left(\hat{\bar \rho}_{W^n}\|\hat{\rho}_{W^n}^{(0)}\right)\leq n D\left(\hat{\breve \rho}_{W}\|\hat{\rho}_W^{(0)}\right) + e^{-\zeta_n^{(2)}\sqrt{n}}$, 
  where $\hat{\breve \rho}_{W}$ is Willie's output from single-mode QPSK coherent state constellation input with mean photon number per mode $\ns$ and $\zeta_n^{(2)}\in\omega(1/\sqrt{n})$.
 Thus, choosing $\ns=c_{\rm cov}{\sqrt{\frac{\delta_{\rm c}}{n}}}$ and \cite[Thm.~2]{bullock2025fundamentallimitscovertcommunication} implies the scheme is covert as $n\to\infty$. Now, using Definition \ref{def:covseccap}, we have 
 \begin{align}
      L_{\rm sec}\left(\mathcal{E}^{(\eta,\nb)}\right)& \geq\lim_{n\to\infty}\frac{\log |\mathcal{M}|}{\sqrt{\delta_{\rm C} n}} = c_{\rm cov}c_{\rm rel}.
 \end{align} 
\noindent{\bf Converse:}
The converse matches that for the non-secret covert case in \cite[Thm.~1]{gagatsos20codingcovcomm}, as an upper bound on covert rate without secrecy is  also an upper bound on this rate with secrecy. 
\end{IEEEproof}

\subsection{Covert Entanglement Generation}\label{subsec:cov-eg}
Similar to the covert secrecy rate, the entanglement generation rate is defined as $\frac{\log(\dim(\mathcal{H_M}))}{n}$ in ebits generated per channel use, where $\dim(\mathcal{H_M})$ is the dimension of the entangled state. However, the SRL also governs entanglement generation, and we define its covert rate as $R_{\rm EG} = \frac{\log(\dim(\mathcal{H_M}))}{\sqrt{n\delta_{\rm C}}}$.

\begin{definition}
(Achievable covert entanglement-generation rate). A covert entanglement-generation rate is \emph{achievable}, if for large enough uses of the channel $n$, and for every $\epsilon_{\rm g} \in (0,1), \delta_{\rm C}>0$, there exists a $(e^{R_{\text{EG}}\sqrt{n\delta_{\rm C}}}, n, \epsilon_{\rm g}, \delta_{\rm C})$-code for covert entanglement-generation. 
\end{definition}

\begin{definition}
(Covert entanglement-generation capacity). The covert entanglement generation capacity $L_\text{EG}$ is the supremum over all \emph{achievable} \emph{covert} entanglement generation rates.
\end{definition}

\begin{theorem}\label{thm:covert-capacity}
    The covert entanglement-generation capacity of a bosonic channel $\mathcal{E}^{(\eta,\nb)}$ 
    is $L_\text{EG}\left(\mathcal{E}^{(\eta,\nb)}\right) =  c_\textrm{cov} c_{\rm rel}$, 
    where $c_{\rm cov}$ and $c_{\rm rel}$ are defined in \eqref{eq:ccov} and \eqref{eq:crel}, respectively.
\end{theorem}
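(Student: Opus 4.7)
The plan is to follow the classical-to-quantum ``coherentification'' strategy used for the finite-dimensional covert setting in \cite{kimelfeld25ceg}, built on top of the QPSK coherent-state secrecy code of Theorem~\ref{thm:covert-capacity-secrecy}. The key observation is that a covert classical code that is simultaneously reliable and secret in the strong sense of \eqref{eq:detsecexp}, i.e., with leakage controlled on the joint Willie-environment system $W^nE^n$ of the Stinespring dilation, can be lifted to a covert entanglement-generation scheme at essentially the same rate via a decoupling plus Uhlmann argument. This will yield $L_{\rm EG}\geq c_{\rm cov}c_{\rm rel}$, and together with the bound $L_{\rm EG}\leq L_{\rm sec}\leq c_{\rm cov}c_{\rm rel}$ obtained from $Q\leq C$ and \cite[Th.~1]{gagatsos20codingcovcomm}, closes the gap.

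For achievability I would take the deterministic codebook $\{x^n(m,k)\}$ from Lemma~\ref{lem:expurge} with QPSK coherent-state codewords $\ket{\phi(x^n(m,k))}_{A^n}$ at mean photon number $\bar{n}_s=c_{\rm cov}\sqrt{\delta_{\rm C}/n}$. Conditioning on the pre-shared key $k$, Alice prepares $\ket{\Phi}_{RM}=\frac{1}{\sqrt{|\mathcal{M}|}}\sum_m\ket{m}_R\ket{m}_M$ and applies the classically-controlled isometry $\ket{m}_M\mapsto\ket{\phi(x^n(m,k))}_{A^n}$. Passing $A^n$ through $n$ copies of $V^{\eta,\nb}_{A\to BWE}$ produces a pure state on $RB^nW^nE^n$ whose reduction on $RW^nE^n$ is $\frac{1}{|\mathcal{M}|}\sum_m\ket{m}\bra{m}_R\otimes\hat{\bar{\rho}}^m_{W^nE^n}$. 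The bound \eqref{eq:detsecexp} forces this to be close in trace norm to $\pi_R\otimes\hat{\breve{\rho}}_{WE}^{\otimes n}$, so $R$ is approximately decoupled from the joint Willie-environment system. Uhlmann's theorem then guarantees a local isometry on $B^n$ that brings the overall purification close to $\ket{\Phi}_{R\hat{M}}\otimes\ket{\psi}_{B'W^nE^n}$. Bob realizes this isometry by Stinespring-dilating the sequential-decoding POVM $\{\hat{\Omega}_{B^n}^{(m,k)}\}$ of Lemma~\ref{lem:secrecyach} into a coherent measurement that writes its outcome into a register $\hat{M}$; the reliability bound \eqref{eq:detrelexp} combined with the gentle-measurement lemma ensures the coherent decoder is close to the Uhlmann isometry on the codeword support. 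A Fuchs--van de Graaf inequality then converts the trace-distance bounds into the required fidelity $F(\hat{\Phi}_{RM},\hat{\tau}^{(\mathcal{G})}_{R\hat{M}})\geq 1-\epsilon_g$. Covertness is inherited directly from \eqref{eq:detcovexp}, because the reduced state on $W^n$ under the entanglement-generation protocol coincides with the classical-ensemble state of the underlying secrecy code; the rate is $\log|\mathcal{M}|/\sqrt{n\delta_{\rm C}}\to c_{\rm cov}c_{\rm rel}$.

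The converse is immediate. Any $(|\mathcal{M}|,n,\epsilon,\delta_{\rm C})$ covert entanglement-generation code yields a covert classical code of at least the same rate by having Alice locally prepare $\ket{m}_M$ in place of the $M$-half of $\ket{\Phi}_{RM}$ and using the same physical channel inputs, so $L_{\rm EG}\leq L$, where $L$ is the (non-secret) covert classical capacity. Since Alice's innocent state is unchanged, covertness is preserved, and \cite[Th.~1]{gagatsos20codingcovcomm} gives $L\leq c_{\rm cov}c_{\rm rel}$.

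The main technical obstacle is the infinite-dimensional Uhlmann/decoupling step. The environment $E^n$ of $V^{\eta,\nb}_{A\to BWE}$ lives on an infinite-dimensional Fock space, and Bob's $B^n$ is infinite-dimensional as well, so care is required to verify that trace-norm closeness of the reduction on $RW^nE^n$ to a decoupled state lifts to trace-norm closeness of purifications on $B^n$, and that the Stinespring dilation of the sequential POVM remains a bounded isometry throughout. This is exactly where the proof departs from the finite-dimensional argument of \cite{kimelfeld25ceg}: the discrete QPSK encoding keeps the information-bearing subspace of $A^n$ finite-dimensional, which is what makes Bob's coherent decoder well-defined on a separable Hilbert space and allows Lemma~\ref{lem:secrecyach}'s stronger secrecy bound, which explicitly includes $E$, to close the argument despite the non-compactness of the ambient space.
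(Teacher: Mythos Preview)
There is a genuine gap in the achievability argument concerning the role of the key $k$. You condition on a single pre-shared value of $k$ and then assert that the reduction on $RW^nE^n$ equals $\frac{1}{|\mathcal{M}|}\sum_m\ket{m}\bra{m}_R\otimes\hat{\bar{\rho}}^m_{W^nE^n}$. But $\hat{\bar{\rho}}^m_{W^nE^n}$ in \eqref{eq:detsecexp} is by definition the \emph{key-averaged} state $\frac{1}{|\mathcal{K}|}\sum_{k}\hat{\rho}^{(m,k)}_{W^nE^n}$; for fixed $k$ the true reduction is $\frac{1}{|\mathcal{M}|}\sum_m\ket{m}\bra{m}_R\otimes\hat{\rho}^{(m,k)}_{W^nE^n}$, and each $\hat{\rho}^{(m,k)}_{W^nE^n}$ is the output of a single deterministic coherent-state product codeword that depends strongly on $m$. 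The secrecy guarantee \eqref{eq:detsecexp} is manufactured precisely by the randomization over $k$ and says nothing about these per-key states, so no decoupling of $R$ from $W^nE^n$ follows and Uhlmann cannot be invoked. If instead you average classically over $k$ you do recover $\hat{\bar{\rho}}^m_{W^nE^n}$, but the global state is now mixed and Uhlmann again fails.

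The paper resolves this by placing the key in \emph{coherent} superposition, $\ket{\phi_m}_{A^n}=\frac{1}{\sqrt{|\mathcal{K}|}}\sum_k e^{jf(m,k)}\ket{x^n_{\rm coh}(m,k)}_{A^n}$, with Alice retaining a copy of $M$ and Bob coherently decoding both $\hat{M}$ and $\hat{K}$. This preserves global purity while making the effective $W^nE^n$-marginal for each $m$ approach the key-averaged state, so that \eqref{eq:detsecexp} can drive the decoupling. Because the coherent-state codewords are non-orthogonal, this route requires a state-approximation step (Lemma~\ref{lem:adaptlem14}, with Appendix~\ref{ap:parsevals} adapting \cite[Lem.~A.0.3]{wilde16quantumit2ed} to non-orthogonal families), followed by a GHZ-to-bipartite conversion as in \cite{kimelfeld25ceg}. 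A second consequence is that, under the superposition encoding, Willie's marginal $\hat{\tilde\rho}_{W^n}$ is not literally $\hat{\bar\rho}_{W^n}$, so covertness is not inherited ``directly'' from \eqref{eq:detcovexp}; the paper closes this via a trace-distance bound together with the infinite-dimensional entropy-continuity estimate of \cite{winter16cont}, as in \eqref{eq:covbnd1}--\eqref{eq:samepower}. Your converse argument is fine and matches the paper.
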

\begin{remark}
The achievability proof adapts \cite[Thm.~2]{kimelfeld25ceg} by modifying the classical code construction in the proof of Theorem \ref{thm:covert-capacity-secrecy}, and converting it to a quantum one. The decoupling argument in \cite{kimelfeld25ceg} relies on the channel from Alice to Bob and Willie being an isometry. Since the bosonic channel model is not an isometry from Alice to Bob and Willie, we consider its Stinespring dilation depicted in Fig.~\ref{fig:therm-iso-channels}\subref{fig:iso-channel}. To generate entanglement effectively, Bob must decouple his state from Willie's and the environment's systems $W^nE^n$. We, therefore, modify the code construction of Theorem \ref{thm:covert-capacity-secrecy} as follows: 
\end{remark}
\begin{IEEEproof}
\noindent{\bf Achievability:}\\
\noindent{\bf Modified secret code construction:}
Alice and Bob use the position-based coding and sequential decoding strategy described in the proof of Theorem \ref{thm:covert-capacity-secrecy}, with public shared classical resource $(X^n)^{\otimes |\mathcal{M}||\mathcal{K}_1|}$. Alice randomly selects and pre-shares a secret key $k_1\in \mathcal{K}_1$ with Bob, enabling covertness.
Further, Alice randomly selects and pre-shares a second secret key $k_2 \in \mathcal{K}_2$ with Bob to be used as a one-time pad, where $\log|\mathcal{K}_2|=\log|\mathcal{M}|$. 
Then, Alice takes her message-key tuple and encodes into a product of QPSK-modulated coherent-states $\ket{\phi(X^n(m\oplus k_2, k_1))}\triangleq\ket{x^n_{\rm coh}(m\oplus k_2,k_1)}_{A^n}=\ket{\sqrt{\ns}e^{j\pi x_1/2}}_{A_1}\otimes\cdots\otimes\ket{\sqrt{\ns}e^{j\pi x_n/2}}_{A_n}$ according to the position-based codebook, and where $m\oplus k_2$ denotes application of $k_2$ as a one-time pad. She then transmits over the bosonic channel. Bob, using $k_1$, decodes employing the position-based decoding strategy, then recovers an estimate of the message using $k_2$ to undo the one-time pad. The following lemma proves that this strategy is reliable, covert against an adversary with access to systems $W^n$, and secret against the joint systems $W^nE^n$.  
\begin{lemma}\label{lem:detmodsec}
    There exists a sequence of deterministic coding schemes $\mathcal{C}$ such that, for $\varsigma_n\in o(1)$ and $n$ large enough,
    \begin{align}
        \log |\mathcal{M}| &= (1-\varsigma_n) c_{\rm rel} \ns n\\
        \log |\mathcal{K}_1| &= \left[(1+\varsigma_n) c_{\rm key} -(1-\varsigma_n) c_{\rm rel}\right]^+ \ns n\\
        \log|\mathcal{K}_2|&=(1-\varsigma_n)c_{\rm rel}\ns n,
    \end{align}
    where $[a]^+\equiv\max(a,0)$, while
    \begin{align}
              \max_{m,k} P_{\rm e}^{(m,k)}(\mathcal{C}) &\leq e^{-\zeta_n^{(1)}\sqrt{n}}\label{eq:detmodrelexp}, \\
       \left|D(\hat{\bar\rho}_{W^n}\|\hat{\breve\rho}_W^{\otimes n})-D(\hat{\breve\rho}_W^{\otimes n}\|\hat{\rho}_{W^n}^{(0)})\right| \label{eq:detmodcovexp}&\leq e^{-\zeta_n^{(2)}\sqrt{n}},\\
  \text{ for every $m$,  }   \hat{\bar{\rho}}^m_{E^nW^n}&=\hat{\bar{\rho}}_{E^nW^n}, \label{eq:detmodsecexp}
    \end{align}
\end{lemma}
\noindent with $\zeta_n^{(i)} \in \omega(1/\sqrt{n})$ for $i=1,2$.
\noindent Here, \eqref{eq:detmodrelexp} follows from \cite[Sec.~5.1]{wang22isittowardtdjournal}, \eqref{eq:detmodcovexp} follows from \cite[Eq. (8)]{wang22isittowardtdjournal} combined with \eqref{eq:covbnd1}-\eqref{eq:samepower}, and \eqref{eq:detmodsecexp} is true by construction.\\
\noindent{\bf Code conversion}: As in \cite[Thm.~2]{kimelfeld25ceg}, we convert the classical code from the proof of Theorem \ref{thm:covert-capacity-secrecy} into an entanglement generation code.
Consider the classical QPSK codebook $\{x^n(m\oplus k_1,k_2)\}_{m,k_1,k_2}$ from the modified secret code construction. Alice converts this to a quantum codebook $\{\ket{\phi_m}_{A^n} : m\in\mathcal{M}\}$ with $\ket{\phi_m}_{A^n}=\frac{1}{\sqrt{|\mathcal{K}_1||\mathcal{K}_2|}}\sum_{k_1,k_2} e^{j f(m\oplus k_2,k_1)}\ket{x^n_{\rm coh}(m\oplus k_1,k_2)}_{A^n}$,
where $f(\cdot,\cdot)$ is defined later. Alice prepares her encoding by generating a maximally entangled state: $\ket{\Phi}_{RM}=\frac{1}{\sqrt{|\mathcal{M}|}}\sum_{m}\ket{m}_R\otimes\ket{m}_M$,
where subsystems $R$ and $M$ are the resource and message, respectively. She generates a copy of the $M$ subsystem using a CNOT gate as in \cite[Eq.~(108)]{kimelfeld25ceg} to obtain $\ket{\tau}_{RMM^\prime}$. 
Alice applies an isometry $\hat{U}_{M^\prime\to A^n}$ that takes $\ket{m}_{M^\prime}\to\ket{\phi_m}_{A^n}$ as 
\begin{align}
        \ket{\tau}_{RMA^n}&= \left(\hat{I}\otimes\hat{I}\otimes\hat{U}_{M^\prime\to A^n}\right)\ket{\tau}_{RMM^\prime}\\
    &=
    \frac{1}{\sqrt{|\mathcal{M}|}}
    \sum_{m}\ket{m}_R\otimes\ket{m}_M \otimes\ket{\phi_m}_{A^n}. \label{eq:alice-quantum-encoding}
\end{align}
She transmits $A^n$ systems over $n$ copies of $V^{\eta,\bar{n}_b}_{A\to BWE}$. We represent the global state as $\ket{\tau}_{RMB^nW^nE^n} = \frac{1}{\sqrt{|\mathcal{M}|}}\sum_{m}\ket{m}_R\otimes\ket{m}_M \otimes\ket{\phi_m}_{B^nW^nE^n}$
with $\ket{\phi_m}_{B^n W^nE^n}=V^{(\eta,\bar{n}_b)\otimes n}_{A\to BWE}(\ket{\phi_m}_{A^n}) $ being the channel output given input $\ket{\phi_m}_{A^n}$ and $V^{\eta,\nb}_{A\to BWE}$ being the Stinespring representation of the bosonic channel $\mathcal{E}^{\eta,\bar{n}_b}_{A\to BW}$.
Recall that Lemma \ref{lem:detmodsec} implies that there is a decoding POVM $\{\hat{\Omega}_{B^n}^{(m,k_1,k_2)}\}$ such that 
the classical code achieves $ \tr\left[\hat{\Omega}_{B^n}^{(m,k_1,k_2)}\hat{\rho}_{B^n}^{(m,k_1,k_2)}\right] \geq 1-e^{-\zeta_n^{(1)}\sqrt{n}}$, for all $m,k_1,k_2$.
We construct a coherent version of this POVM \cite[Sec.~5.4]{wilde16quantumit2ed} as $\hat{\mathcal{D}}_{B^n \to B^n \check{M} \check{K}_1 \check{K}_2} = \sum_{m,k_1,k_2} \sqrt{\hat{\Omega}_{B^n}^{(m,k_1,k_2)}} \otimes \ket{m}_{\check{M}} \otimes \ket{k_1}_{\check{K}_1}\otimes \ket{k_2}_{\check{K}_2}$ 
which, after its use, yields the global state:
\begin{align}
    &\ket{\tau}_{RMB^nW^nE^n\check{M}\check{K}_1\check{K}_2}\nonumber\\&=\left(\hat{I}_{RM}\otimes  \hat{\mathcal{D}}_{B^n\to B^n\check{M}\check{K}_1\check{K}_2}\otimes\hat{I}_{W^nE^n}\right)\ket{\tau}_{RMB^nW^nE^n}\label{eq:globaldecoded}
\end{align}

We now show that this conversion yields a reliable and covert entanglement-generation scheme.
\begin{lemma}\label{lem:entanglementgeneration}
    Consider covert entanglement generation via lossy thermal noise bosonic channel $\mathcal{E}^{\eta,\bar{n}_b}_{A\to BW}$ with Stinespring dilation  $V^{\eta,\bar{n}_b}_{A\to BWE}$. For any $\varsigma_n\in o(1)\cap\omega(1/\sqrt{n})$ there exists $\varsigma_n^{(1)},\varsigma_n^{(4)}\in o(1)\cap\omega(1/\sqrt{n})$ such that for $n$ sufficiently large
    \begin{align}
        \log d_M &\geq(1-\varsigma_n) c_{\rm rel}\ns \label{eq:covegenrate}n\\
        \log |\mathcal{K}_1| &= \left[(1+\varsigma_n) c_{\rm key} -(1-\varsigma_n) c_{\rm rel}\right]^+\ns n\\
        \log|\mathcal{K}_2|&=(1-\varsigma_n)c_{\rm rel}\ns n,
    \end{align}
    while 
    \begin{align}
        F(\hat{\Phi}_{RM},\hat{\tau}_{R \check{M}})&\geq 1-e^{-\varsigma_n^{(1)}\sqrt{n}}\label{eq:ceqrel}\\
\left|D\left(\hat{\tilde\rho}_{W^n}\middle\|\hat{\rho}_{W^n}^{(0)}\right)-D\left( \hat{\breve{\rho}}_{W^n}^{\otimes n}\middle\|\hat{\rho}_{W^n}^{(0)}\right)\right|&\leq e^{-\varsigma_n^{(4)}\sqrt{n}} \label{eq:ceqcov}   \end{align}
where $d_M=\dim(\mathcal{H}_M)$, $\hat{\Phi}_{RM}$ is the maximally entangled state, $\hat{\tau}_{R \check{M}}$ is Bob's decoded state and $\hat{\tilde\rho}_{W^n}$ is Willie's received state from the covert entanglement generation scheme. 
\end{lemma}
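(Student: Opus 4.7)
The bounds on $\log d_M$ and $\log|\mathcal{K}|$ are inherited verbatim from Lemma \ref{lem:expurge}: the quantum encoding \eqref{eq:alice-quantum-encoding} preserves the message dimension $|\mathcal{M}|$ and re-uses the same secret key alphabet $\mathcal{K}$, so the rate and key-size expressions transfer unchanged. The substantive work reduces to establishing the fidelity estimate \eqref{eq:ceqrel} and the covertness estimate \eqref{eq:ceqcov}, which I plan to obtain by propagating the reliability, covertness, and secrecy guarantees of Lemma \ref{lem:expurge} through the coherent classical-to-quantum lifting defined by \eqref{eq:alice-quantum-encoding}--\eqref{eq:globaldecoded}, adapting the finite-dimensional blueprint of \cite[Thm.~2]{kimelfeld25ceg} to the bosonic setting.

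For the fidelity \eqref{eq:ceqrel}, I would proceed in two stages. First, I apply the gentle-measurement lemma: since by \eqref{eq:detrelexp} the POVM element $\hat{\Omega}_{B^n}^{(m,k)}$ succeeds with probability at least $1-e^{-\zeta_n^{(1)}\sqrt{n}}$ uniformly in $(m,k)$, its square root acts as an approximate identity on each branch $\ket{x^n_{\mathrm{coh}}(m,k)}_{B^n W^n E^n}$ of the superposition $\ket{\phi_m}_{B^n W^n E^n}$, so the coherent decoder $\mathcal{D}_{B^n\to B^n\hat{M}\hat{K}}$ maps \eqref{eq:globaldecoded} to a vector exponentially close to one carrying the correct labels in $\hat{M}$ and $\hat{K}$. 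Second, I invoke secrecy \eqref{eq:detsecexp} via Uhlmann's theorem: the pure state $\ket{\phi_m}_{B^n W^n E^n}$ purifies an $m$-dependent state on $W^n E^n$ that is $e^{-\zeta_n^{(3)}\sqrt{n}}$-close to the $m$-independent reference $\hat{\breve\rho}_{WE}^{\otimes n}$, so up to a local isometry on the complementary registers the global post-decoded state is close to a tensor product of a fixed purification of $\hat{\breve\rho}_{WE}^{\otimes n}$ with a maximally-entangled state on $R\hat{M}$. Tracing out $B^n W^n E^n\hat{K}$ and combining the two error terms via the triangle inequality for trace distance (then converting to fidelity using the standard Fuchs-van de Graaf inequalities) yields \eqref{eq:ceqrel} with $\varsigma_n^{(1)}\in o(1)\cap\omega(1/\sqrt{n})$ controlled by the worse of $\zeta_n^{(1)}$ and $\zeta_n^{(3)}$.

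For the covertness \eqref{eq:ceqcov}, the same purification-via-Uhlmann argument shows that $\hat{\tilde\rho}^m_{W^n}$ is $e^{-\Theta(\sqrt{n})}$-close in trace norm to $\hat{\breve\rho}_W^{\otimes n}$ for every $m$; averaging over $m$ and applying the triangle inequality produces $\|\hat{\tilde\rho}_{W^n}-\hat{\breve\rho}_W^{\otimes n}\|_1 \leq e^{-\Theta(\sqrt{n})}$. The target \eqref{eq:ceqcov} then follows by continuity of $D(\,\cdot\,\|\hat{\rho}_{W^n}^{(0)})$ in its first argument, combined with \eqref{eq:detcovexp} used as a triangle inequality against $\hat{\breve\rho}_W^{\otimes n}$. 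The main obstacle I expect is precisely this continuity step: in infinite dimensions quantum relative entropy is not trace-norm continuous without an energy constraint. I would therefore invoke Winter's energy-constrained continuity bound (or the Shirokov refinement), leveraging that both $\hat{\tilde\rho}_{W^n}$ and $\hat{\breve\rho}_W^{\otimes n}$ have per-mode mean photon number at most $\eta\nb+(1-\eta)\bar{n}_s$ with $\bar{n}_s\in o(1)$; the resulting energy-dependent penalty is dominated by the exponentially small trace-norm bound, which preserves the required $\varsigma_n^{(4)}\in o(1)\cap\omega(1/\sqrt{n})$ decay.
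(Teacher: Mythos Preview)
Your plan is essentially the paper's: lift Lemma \ref{lem:expurge} coherently following \cite[Thm.~2]{kimelfeld25ceg}, combine gentle-measurement state approximation with Uhlmann-based decoupling via the $W^nE^n$-secrecy bound \eqref{eq:detsecexp} for fidelity, and invoke Winter's energy-constrained entropy continuity \cite{winter16cont} for the QRE step. Two bosonic-specific refinements the paper makes explicit are worth knowing. First, the state-approximation lemma underlying your gentle-measurement step (\cite[Lemma A.0.3]{wilde16quantumit2ed}) is stated for orthogonal codewords, which the QPSK coherent-state codewords $\ket{x^n_{\rm coh}(m,k)}$ are not; the paper patches this in Lemma \ref{lem:adaptlem14} and Appendix \ref{ap:parsevals}. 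Second, for covertness the paper does not reprise Uhlmann but argues more directly: since Bob's decoder is local on $B^n$, Willie's marginal is unchanged by decoding, and the $W^n$-marginal of the approximated post-decoded state $\ket{\chi}$ equals the \emph{classical}-code Willie state $\hat{\bar\rho}_{W^n}$ (tracing out $\hat{K}$ kills the key coherences), so $\|\hat{\tilde\rho}_{W^n}-\hat{\breve\rho}_W^{\otimes n}\|_1$ follows from Lemma \ref{lem:adaptlem14} plus \eqref{eq:detsecexp} via the triangle inequality; this avoids the subtlety that the pre-decoding marginal $\tr_{B^n}\ket{\phi_m}\bra{\phi_m}$ still carries off-diagonal $k$-terms not directly controlled by \eqref{eq:detsecexp}. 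After that trace-norm bound, \eqref{eq:ceqcov} follows via the decomposition \eqref{eq:covbnd1}--\eqref{eq:samepower} exactly as you anticipate.
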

The proof of Lemma \ref{lem:entanglementgeneration} is adapted from \cite{kimelfeld25ceg}, and key steps are provided in Appendix \ref{ap:entanglementgeneration}. The main challenge in adapting \cite{kimelfeld25ceg} to infinite dimensions is to ensure that state approximation holds for non-orthogonal coherent state codewords and to carefully decouple Bob's state from both Willie's subsystems $W^n$ and the ancillary subsystems $E^n$. Further, showing that covertness is maintained requires the use of continuity of entropy results for bosonic systems \cite{winter16cont}.

We now show that this scheme with rate given by \eqref{eq:covegenrate} achieves the capacity in the limit $n\to\infty$ while remaining covert. Note that \eqref{eq:ceqrel} ensures reliability and \eqref{eq:ceqcov} implies 
     $D\left(\hat{\tilde \rho}_{W^n}\|\hat{\rho}_{W^n}^{(0)}\right)\leq n D\left(\hat{\breve \rho}_{W}\|\hat{\rho}_W^{(0)}\right) + e^{-\varsigma_n^{(4)}\sqrt{n}}$,
 where $\varsigma_n^{(4)}\in\omega(1/\sqrt{n})$.
 Thus, choosing $\ns=c_{\rm cov}{\sqrt{\frac{\delta}{n}}}$ implies the scheme is covert in the limit $n\to\infty$. Now, using Definition \ref{def:covseccap}, we have $L_\text{EG}\left(\mathcal{E}^{\eta,\nb}\right) \geq\lim_{n\to\infty}\frac{\log (\dim(\mathcal{H}_{\mathcal{M}}))}{\sqrt{\delta_{\rm C} n}} = c_{\rm cov}c_{\rm rel}$.

\noindent{\bf Converse:}
Here, we show that the decoding reliability and covertness conditions imply an upper bound on our entanglement generation rate. Specifically, if for any $\epsilon_{\rm g},\delta_{\rm C}>0$,
\begin{align}
    F\left(\hat{\Phi}_{RM},\hat{\tau}_{R\check{M}}^{(\mathcal{G})}\right)&\geq1-\epsilon_{\rm g}\label{eq:conv-rel} \text{ and }
    \\ D\left(\hat{\rho}_{W^n}\|\hat{\rho}_{W^n}^{(0)}\right)&\leq \delta_{\rm C},
\end{align}
then 
\begin{align}
    \log |\mathcal{M}|\leq (1+\mu_n)c_{\rm cov}c_{\rm rel}\sqrt{n\delta_{\rm C}}
\end{align}
for some $\mu_n\in o(1)$.

Note that \eqref{eq:conv-rel} implies:
\begin{align}
    \sqrt{\epsilon_{\rm g}}&\geq\sqrt{1-F\left(\hat{\Phi}_{RM},\hat{\tau}_{R\check{M}}^{(\mathcal{G})}\right)}\geq\frac{1}{2}\left\|\hat{\Phi}_{RM}-\hat{\tau}_{R\check{M}}^{(\mathcal{G})}\right\|_1\label{eq:conv-fuchs}\\
    &\geq\frac{1}{2}\left\|\frac{1}{|\mathcal{M}|}\hat{I}_M-\hat{\tau}_{\hat M}^{(\mathcal{G})}\right\|_1\label{eq:conv-trace-monotonicity}
\end{align}
where second inequality in \eqref{eq:conv-fuchs} is due to the Fuchs-van de Graaf inequalities \cite[Th.~1]{fuchsCryptographicDistinguishabilityMeasures1999} and \eqref{eq:conv-trace-monotonicity} is due to trace monotonicity. Note that \eqref{eq:conv-trace-monotonicity} is the classical reliability condition. Thus, the converse follows from the corresponding argument in the capacity proof for classical covert communication over the bosonic channel \cite[Th.~1]{gagatsos20codingcovcomm}.
\end{IEEEproof}

\section{Towards Practical Covert Entanglement Generation}
\label{sec:practical}

\begin{table}[h]
\centering
\caption{List of Symbols used in Section~\ref{sec:practical}}
\begin{tabular}{lp{6cm}}
\hline
\textbf{Symbol} & \textbf{Description} \\
\hline
$q$ & Probability Alice will transmit in a given channel use \\
$D_{\chi^2}\left(\hat{\rho}\middle\|\hat{\sigma}\right)$ & $\chi^2$-divergence for states $\hat{\rho}$ and $\hat{\sigma}$, $\operatorname{tr}[\hat{\rho}^2\hat{\sigma}^{-1}] - 1$\\
$\mathbf{x}$ & Pre-shared secret consisting of a binary sequence indicating selected channel uses for transmission\\
$\mathbf{y}$ & Pre-shared secret consisting of a quaternary sequence indicating choice of random Pauli operators \\
$w(\mathbf{x})$ & Weight, or total number of ones in $\mathbf{x}$ \\
$\vartheta$ & Arbitrary parameter required to remove expectation value of the number of transmitted ebits, $\vartheta>0$ \\
$p_I, p_X, p_Y, p_Z$ & Channel parameters from the combined Alice-to-Bob depolarizing and Pauli channel composition \\
$q_I, q_X, q_Y, q_Z$ & Alice-to-Bob Pauli channel parameters generated from Pauli twirling\\
$p_{\rm{f}}$ & Failure probability for Bob's projection operation \\
$\tau$ & Transmittance when a lossy thermal-noise channel is decomposed into a pure-loss + amplifier \\
$G$ & Gain parameter when a lossy thermal-noise channel is decomposed into a pure-loss + amplifier channel\\
$\mathcal{A}$ & A set of all $\mathbf{x}$ that is $\varepsilon$ close to $q$\\
$\hat{U}$ & Mappings defining the QECCs and covert QECCs\\
$\mathcal{Q}$ & Set of Pauli operators, $\{\hat{I}, \hat{X}, \hat{Y}, \hat{Z}\}$ \\
$\hat{P}^{(\mathbf{x},\mathbf{y})}$ & Sequence containing $\hat{I}$'s in innocent state positions in $\mathbf{x}$, and Pauli gates indicated in $\mathbf{y}$ otherwise \\
$\mathcal{P}^{\vec{p}}_{A\to B}$ & A Pauli channel parameterized by $\vec{p}$ \\
$R,M,\check{A},A$ & Systems containing reference state, Alice's entangled state, output state of the QECC, and Alice's input state into the channel  \\
$B, \check{B}, \check{M}$ & Systems containing the channel output state at Bob, Bob's projected state, and recovered entangled state \\ 
$\hat{\Pi}_B$ & Projection to the qubit basis from system $B$ to $\check{B}$ \\
$\ket{\Phi}_{RM}$ & Alice's entangled state with system $M$ sent to Bob\\
$\hat{\rho}_{RB^n}, \hat{\rho}_{W^n}$ & The states Bob and Willie receive when Alice uses a covert QECC over $n$ channel uses \\
$\hat{\tau}_{R\check{B}}$ & Bob's state after projection\\
$R_{\rm sr}, R_{\rm dr}$ & Achievable single- and dual-rail ebit rates\\
\hline
\end{tabular}
\label{tab:symbols-practical}
\end{table}

While the covert entanglement-generation capacity of the bosonic channel is achievable per Theorem~\ref{thm:covert-capacity}, it is unclear how to construct Alice's state in \eqref{eq:alice-quantum-encoding} physically. Hence, we investigate entanglement generation using single- and dual-rail photonic qubit modulation schemes in the lemmas that follow.

Let $[a]^+\equiv\max(a,0)$ and $H(\vec{p})=-\sum_{p_i\in\vec{p}} p_i\log_2(p_i)$ be the Shannon entropy associated with probability vector $\vec{p}$. A single-rail photonic qubit is encoded in a single mode of a photon represented by $\ket{\psi}=\alpha\ket{0}+\beta\ket{1}$, where $\ket{1}$ is a single-photon state. The dual-rail photonic qubit uses a single photon across two modes: $\ket{\psi}=\alpha\ket{01}+\beta\ket{10}$. Furthermore, $D_{\chi^2}\left(\hat{\rho}\middle\|\hat{\sigma}\right) = \operatorname{tr}[\hat{\rho}^2\hat{\sigma}^{-1}] - 1$ is the $\chi^2$-divergence between two states $\hat{\rho}$ and $\hat{\sigma}$.
Table \ref{tab:symbols-practical} lists commonly used symbols in this section and their descriptions.

\begin{figure*}[htb]
\centering
\includegraphics[width=\textwidth]{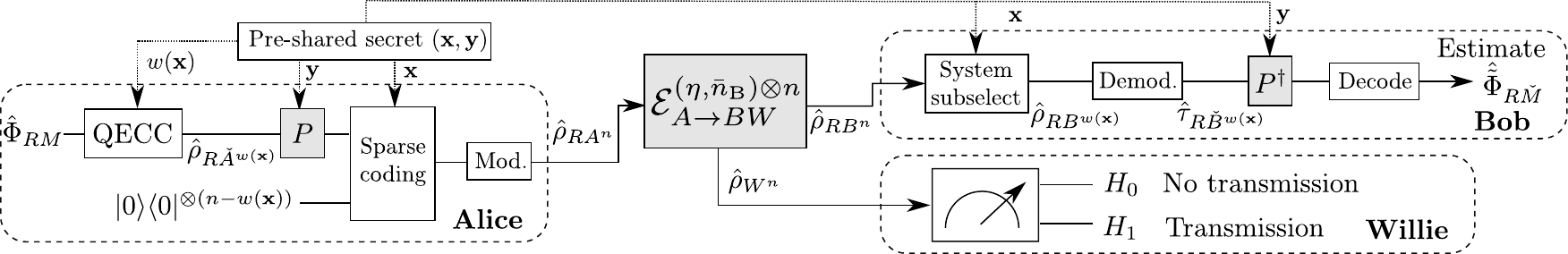}
\caption{Construction of achievable covert quantum entanglement generation over the bosonic channel $\mathcal{E}_{A\to BW}^{\eta,\bar{n}_\textrm{B}}$. Alice first prepares a Bell state $\hat{\Phi}_{RM}$. She keeps the reference subsystem $R$ and transmits the state of subsystem $M$ of $\hat{\Phi}_{RM}$ as follows: given a pre-shared secret $\mathbf{x}$, she applies a QECC corresponding to the number $w(\mathbf{x})$ of non-innocent output states, the output of the QECC is $\hat{\rho}_{R\check{A}^{w(\mathbf{x})}}$.
She then applies Pauli gates defined by a pre-shared secret sequence $\mathbf{y}$, followed by sparse coding that ``spreads'' these $w(\mathbf{x})$ non-innocent states across $n$ channel uses by inserting innocent input states according to locations in $\mathbf{x}$.
Alice then modulates and transmits the resulting state $\hat{\rho}_{RA^n}$. Bob receives $\hat{\rho}_{RB^n}$, sub-selects the systems containing the non-innocent states using $\mathbf{x}$, obtaining $\hat{\rho}_{RB^{w(\mathbf{x})}}$ and demodulates by projecting them onto the qubit basis, yielding $\hat{\rho}_{R\check{B}^{w(\mathbf{x})}}$. He then inverts the Pauli twirling operation using $\mathbf{y}$ and decodes to obtain a state $\hat{\tilde{\Phi}}_{R\check{M}}$ entangled with Alice's system $R$. Willie performs an optimal hypothesis test on whether transmission occurred.}
\label{fig:construction-sr-no-locc}
\end{figure*}

\begin{lemma}\label{lemma:sr-no-locc}
Using a single-rail photonic qubit, for sufficiently large $n$ and arbitrary $\vartheta>0$, at least $(1-\vartheta)\sqrt{n}\sqrt{2}c_\textrm{cov}R_{\rm sr}\sqrt{\delta_{\rm C}}$ ebits can be generated reliably and covertly in $n$ independent bosonic channel uses, where $c_\mathrm{cov}$ is in \eqref{eq:ccov}, $\delta_{\rm C}$ is the QRE-covertness constraint, and $R_{\rm sr}\geq\left[1-H(\vec{p})\right]^+$ is the constant achievable rate of reliable ebit generation per channel use for
    $\vec{p}=(p_I, p_X, p_Y, p_Z)$, $
    p_I = (1-\frac{3}{4}\pfail)q_I$, $
    p_i =(1-\frac{3}{4}\pfail)q_i+\frac{1}{4}\pfail$ for $i=X,Y,Z$ and
\begin{align}
    \pfail&=\frac{1+(1-\eta)\nb(3+2\nb-2\eta(\nb+\frac{1}{2}))}{(1+(1-\eta)\nb)^3} \\
    q_I&=\frac{1}{2N(G,\tau)}\left( \frac{G(2-\tau)-1+2\tau + 2\sqrt{G\tau}}{G^2} \right) \label{eq:qtw-I}\\
    q_X&= \frac{1}{2N(G,\tau)}\left( \frac{G(2-\tau)-1}{G^2} \right) \\
    q_Y&= \frac{1}{2N(G,\tau)}\left( \frac{G(2-\tau)-1}{G^2}\right) \\
    q_Z&= \frac{1}{2N(G,\tau)}\left( \frac{G(2-\tau)-1+2\tau - 2\sqrt{G\tau}}{G^2} \right), \label{eq:qtw-Z}
\end{align}
with $G=1+(1-\eta)\nb$, $\tau=\frac{\eta}{G}$, and $N(G, \tau) = \frac{G(2-\tau)+\tau-1}{2G^2}$.
\end{lemma}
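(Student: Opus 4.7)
The plan is to combine three ingredients matching the block diagram of Fig.~\ref{fig:construction-sr-no-locc}: (i) a per-use characterization of the effective qubit-level channel induced by the lossy-thermal channel followed by projection of Bob's output onto the single-rail qubit subspace, symmetrized via Pauli twirling into a Pauli channel with rates $\vec p$; (ii) the quantum hashing bound applied via a random stabilizer QECC on the $M=w(\mathbf x)$ active positions to achieve rate $R\geq[1-H(\vec p)]^+$; and (iii) sparse coding that embeds these $M$ non-innocent uses among $n$ total uses to enforce the QRE covertness constraint.

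I would first derive the effective Pauli channel. For each active position Alice transmits $\ket{\psi}=\alpha\ket{0}+\beta\ket{1}$ through $\mathcal{E}^{\eta,\nb}_{A\to BW}$; projecting Bob's output onto $\mathrm{span}\{\ket{0},\ket{1}\}$ succeeds with probability $1-\pfail$, and the claimed expression for $\pfail$ follows from a direct Kraus-operator calculation for the thermal-loss channel on the relevant Fock matrix elements up to $\ket{2}$. Conditioned on success, the Pauli twirl keyed by $\mathbf y$ symmetrizes the post-selected qubit CPTP map into a Pauli channel with weights $(q_I,q_X,q_Y,q_Z)$ in \eqref{eq:qtw-I}--\eqref{eq:qtw-Z}, whose normalization $N(G,\tau)$ I would identify with the twirl-averaged projection-success probability. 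To convert the heralded failure into an in-band qubit error, Bob outputs a maximally mixed qubit on failure, producing the convex combination $p_I=(1-\tfrac34\pfail)q_I$ and $p_j=(1-\tfrac34\pfail)q_j+\tfrac14\pfail$ for $j\in\{X,Y,Z\}$. The quantum hashing bound then guarantees that a random stabilizer code on $M$ qubits achieves reliable entanglement generation at rate $R\geq[1-H(\vec p)]^+$ with fidelity to a maximally entangled state approaching one exponentially in $M$; composed with Alice's Bell-pair preparation and Bob's coherent decoding, this yields $MR$ nearly-maximally-entangled pairs shared with the reference system.

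For covertness, the key $\mathbf x$ places the $M$ active positions uniformly at random among the $n$ uses with vacuum elsewhere, so Willie's key-averaged single-mode state is $\hat{\sigma}_W=\tfrac{M}{n}\hat{\rho}_W^{(1)}+\bigl(1-\tfrac{M}{n}\bigr)\hat{\rho}_W^{(0)}$, where $\hat{\rho}_W^{(1)}$ is his twirl-averaged output on an active slot. A second-order $\chi^2$-expansion around $\hat{\rho}_W^{(0)}$, standard in covert-communication analyses \cite{bullock2025fundamentallimitscovertcommunication,wang22isittowardtdjournal}, gives $D\bigl(\hat{\sigma}_W\,\|\,\hat{\rho}_W^{(0)}\bigr)\leq \tfrac{1}{2}(M/n)^2 D_{\chi^2}\bigl(\hat{\rho}_W^{(1)}\,\|\,\hat{\rho}_W^{(0)}\bigr)(1+o(1))$, and additivity over a Bernoulli$(M/n)$ approximation of the sparse placement (as in \cite{bash13squarerootjsac}) bounds the total QRE by $\tfrac{M^2}{2n}D_{\chi^2}\bigl(\hat{\rho}_W^{(1)}\,\|\,\hat{\rho}_W^{(0)}\bigr)(1+o(1))$. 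Setting this $\leq\delta_{\rm C}$ and evaluating $D_{\chi^2}$ for the single-rail qubit ensemble yields $M(n)\geq (1-\vartheta)\sqrt{2n\delta_{\rm C}}/\sqrt{D_{\chi^2}(\hat{\rho}_W^{(1)}\|\hat{\rho}_W^{(0)})}$; multiplying by $R$ and matching the resulting prefactor against \eqref{eq:ccov} produces the stated bound with constant $\sqrt{2}\,c_{\rm cov}$, the extra $\sqrt{2}$ reflecting the mean-photon-number-per-slot mismatch between the single-rail ensemble and the QPSK constellation used in Theorem~\ref{thm:covert-capacity-secrecy}.

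The hard parts will be (i) deriving the twirled Pauli rates $(q_I,q_X,q_Y,q_Z)$ with the exact algebraic form of \eqref{eq:qtw-I}--\eqref{eq:qtw-Z}, which requires careful tracking of off-diagonal matrix elements of the thermal-loss Choi state restricted to the qubit sector together with correct normalization by $N(G,\tau)$; and (ii) matching the covertness prefactor exactly, i.e., verifying that $D_{\chi^2}\bigl(\hat{\rho}_W^{(1)}\,\|\,\hat{\rho}_W^{(0)}\bigr)$ for the twirled single-rail qubit ensemble reduces to the Gaussian-state divergence already encoded in $c_{\rm cov}^{-2}$ (up to the factor of two), and that the sparse-to-Bernoulli approximation contributes only $o(\sqrt{n})$ corrections absorbable into $(1-\vartheta)$.
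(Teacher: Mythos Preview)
Your proposal is correct and matches the paper's approach in all three ingredients: Pauli twirling to obtain the effective Pauli channel $\vec p$ for the hashing bound, sparse placement of $w(\mathbf x)\approx qn$ active slots among $n$ uses, and the $\chi^2$ bound on the QRE to fix $q$. Two minor sharpenings: the paper makes explicit that averaging over the secret $\mathbf y$ acts as a completely depolarizing channel on the encoded QECC codeword, so Willie's per-slot input is exactly $\hat\pi=\hat I/2$ independent of the code (this is what licenses the product-form per-mode analysis you sketch); and the $\sqrt 2$ factor comes directly from computing $D_{\chi^2}\bigl(\hat\rho_W^{(\hat\pi)}\big\|\hat\rho_W^{(0)}\bigr)=(1-\eta)^2/\bigl(4\eta\nb(1+\eta\nb)\bigr)=1/(2c_{\rm cov}^2)$ for that maximally-mixed input, not from a comparison with the QPSK constellation.
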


\begin{IEEEproof}
\noindent{\bf Construction and reliability:} The construction is depicted in Fig.~\ref{fig:construction-sr-no-locc}. 
Alice prepares a Bell state $\ket{\Phi}_{RM}$. System $R$ is kept as reference, system $M$ is prepared to be sent through the channel. 
 To ensure covertness,
 Alice employs the sparse signaling approach from  \cite{tahmasbi21signalingcovert}:
 let $\mathbf{x}\equiv\{x_i, i=1,\ldots, n\}$ be a binary sequence indicating the selected channel uses for non-innocent transmission, with $x_i=1$ corresponding to a non-innocent qubit input from Alice on the $i^{\text{th}}$ channel use, and $x_i=0$ to innocent input. Thus, $w(\mathbf{x})\equiv\sum_{i=1}^{n}x_i$ is the number of non-innocent inputs for a given $\mathbf{x}$.
 For an arbitrary $\vartheta>0$, define $\mathcal{A}\equiv\left\{\mathbf{x}: \left|q-\frac{1}{n}w(\mathbf{x})\right|\leq\vartheta\right\}$
as the set containing length-$n$ binary sequences whose normalized weight is close to $q \in(0,1)$.
Let $ p_X(x)=\{q \text{~if~} x=1; 1-q \text{~if~} x=0\} \label{eq:p_X}.$
Denote $p(\mathcal{A})=\sum_{\mathbf{x}\in\mathcal{A}}\prod_{i=1}^{n}p_X(x_i)$, and: $    p_{\mathbf{X}}\left(\mathbf{x}\right)  \equiv \left\{\frac{\prod_{i=1}^{n}p_X(x_i)}{p(\mathcal{A})} \text{~if~} \mathbf{x}\in\mathcal{A}; 
         0 \text{~if~} \mathbf{x}\notin\mathcal{A}\right\}$.
Alice and Bob choose the channel uses for transmission by randomly sampling $\mathbf{x}\in\mathcal{A}$ using $p_{\mathbf{X}}$.
Their choice $\mathbf{x}$ comprises part of the classical pre-shared secret in Fig.~\ref{fig:construction-sr-no-locc}. 

Alice and Bob further generate a set $\{c_i\}_{i\in w(\mathcal{A})}$ of $|w(\mathcal{A})|= \lceil 2 \vartheta q n \rceil$ 
quantum error correction codes (QECCs) following standard techniques \cite[Sec.~24.6.3]{wilde16quantumit2ed}, where   $w(\mathcal{A})\equiv\{w(\mathbf{x}) : \mathbf{x}\in\mathcal{A}\}$.
For  $w(\mathbf{x})$ available non-innocent states,  the  QECC $c_{w(\mathbf{x})}$ yields an isometry $\hat{U}^{c_{w(\mathbf{x})}}$ that maps the input state $\hat{U}^{c_{w(\mathbf{x})}}\ket{\Phi}_{RM^{w(\mathbf{x})}}
=|c_{w(\mathbf{x})}(\Phi)\rangle_{R\check{A}^{w(\mathbf{x})}}$. We extend this to \emph{covert QECC} $c$ that maps $\hat{U}^{c(\mathbf{x})}:(\mathbf{x},|\Phi\rangle_{RM}) \rightarrow|c(\psi, \mathbf{x})\rangle_{RA^n}$, where $\hat{U}^{c(\mathbf{x})}=\left(\hat{I}_{A^{\mathbf{x}^c}}\otimes \hat{U}^{c_{w(\mathbf{x})}}\right)$ with $|c_{w(\mathbf{x})}(\psi)\rangle_{R\check{A}^{w(\mathbf{x})}}=\operatorname{tr}_{A^{\mathbf{x}^c}}\left(|c(\psi, \mathbf{x})\rangle_{A^n}\right)$ being the QECC for $w(\mathbf{x})$, and $\operatorname{tr}_{A_{\mathbf{x}^c}}$ the partial trace over the innocent inputs defined by $\mathbf{x}^c$, the one's compliment of $\mathbf{x}$. Given $\mathbf{x}$, the covert QECC can be thought of as the sparse encoding of the corresponding non-covert QECC of block length $w(\mathbf{x})$, with innocent states injected in each of the systems $\{A_i : x_i=0\}$.
Hence, the systems occupied by innocent states do not contribute to the code's error-correcting capabilities but are utilized for covertness.

Alice prepares $w(\mathbf{x})$ Bell state copies $\ket{\Phi}_{RM}^{\otimes w(\mathbf{x})}$ and applies the isometry $\hat{U}^{c_{w(\mathbf{x})}}$ to $M^{w(\mathbf{x})}$ subsystems. Alice and Bob use sequence $\hat{P}^{(\mathbf{x},\mathbf{y})}$ containing $\hat{I}$'s in innocent state positions in $\mathbf{x}$, and Pauli gates sampled uniformly at random from $\mathcal{Q}\equiv\{\hat{I},\hat{X},\hat{Y},\hat{Z}\}$ otherwise, with the choices being the pre-shared secret $\mathbf{y}$.  Alice applies $\hat{P}^{(\mathbf{x},\mathbf{y})}$ in the first stage of Pauli twirling \cite{emersonSymmetrizedCharacterizationNoisy2007}. Thus, Alice transmits $A^n$ subsystems of entangled encoded state $\ket{\Phi_c}_{R^{w(\mathbf{x})}A^n}=(\hat{I}_R^{\otimes w(\mathbf{x})}\otimes\hat{P}^{(\mathbf{x},\mathbf{y})}\hat{U}^{c(\mathbf{x})})\left(\ket{\mathbf{0}}_{A_{x^c}}\ket{\Phi}_{RM}^{\otimes w(\mathbf{x})}\right)$ given $\mathbf{x}, \mathbf{y}$,
over the channel $\mathcal{E}_{A\to BW}^{\eta,\bar{n}_\textrm{B}}$. Bob receives the composite state $\hat{\rho}_{RB^n}$, where the reference system $R$ is held by Alice. He uses $\mathbf{x}$ to subselect the state $\hat{\rho}_{RB^{w(\mathbf{x})}} = \tr_{B^{\mathbf{x}^c}}(\hat{\rho}_{RB^n})$ of the non-covert QECC's output systems.

Denote by $\hat{\rho}_{RB_i}$ the state of Bob's $i^\text{th}$ subsystem in $\hat{\rho}_{RB^{w(\mathbf{x})}}$. It can be derived like $\hat{\rho}_W^{(\psi)}$ in Appendix~\ref{app:chi-square}, with careful accounting for the reference system. The state of each subsystem $B$ is demodulated by projecting it onto the qubit basis via application of $\hat{\Pi}_{B_i} = \ketbra{0}{0}_{B_i}+\ketbra{1}{1}_{B_i}$. Projection is a probabilistic process. As Alice first applies a random Pauli before transmitting, the failure probability on average is $\pfail=\frac{1+(1-\eta)\nb(3+2\nb-2\eta(\nb+1/2))}{(1+(1-\eta)\nb)^3}$.  
When Bob fails to project, he replaces his state with the maximally mixed state $\hat{\pi}_{\check{B}_i}=\frac{\hat{I}}{2}$. This yields $(1-\pfail)\hat{\Pi}_{B_i}\hat{\rho}_{RB_i}\hat{\Pi}_{B_i}^\dagger + \pfail \hat{\pi}_R\otimes\hat{\pi}_{\check{B}_i} = (1-\pfail)\bobiqubit+\pfail\hat{\pi}_R\otimes\hat{\pi}_{\check{B}_i}$, a state in the qubit basis, where $\bobiqubit$ is the projected state in the $i^\text{th}$ system. This equates to a depolarizing channel acting on $\bobiqubit$, where $\check{B}$ denotes the system after projection. Bob then uses pre-shared $\mathbf{y}$ to apply the appropriate sequence of Pauli gates, completing the Pauli twirling process. Pauli twirling by Alice and Bob on their qubit states guarantees a Pauli noise channel $\vec{q}_{\text{tw}}=(q_I,q_X,q_Y,q_Z)$ with channel parameters defined in \eqref{eq:qtw-I}-\eqref{eq:qtw-Z} and derived in Appendix~\ref{appendix:pauli-twirling}.

Recall that a Pauli channel, $\mathcal{P}_{A\to B}^{\vec{p}}$, maps input state $\hat{\rho}$ as follows:
$\mathcal{P}_{A\to B}^{\vec{p}}(\hat{\rho})=p_I \hat{I} \hat{\rho} \hat{I} + p_X \hat{X} \hat{\rho} \hat{X} + p_Y \hat{Y} \hat{\rho} \hat{Y} + p_Z \hat{Z} \hat{\rho} \hat{Z}$ for $\vec{p}=(p_I,p_X,p_Y,p_Z)$. 
A depolarizing channel with depolarizing parameter $\lambda$ is a Pauli channel given by $\mathcal{P}^{\vec{p}_\text{dep}(\lambda)}_{A\to B}(\hat{\rho}_A)$ with $\vec{p}_\text{dep}(\lambda) = \left(1-\frac{3\lambda}{4},\frac{\lambda}{4},\frac{\lambda}{4},\frac{\lambda}{4}\right)$. 
Hence, the combination of the depolarizing channel from projection to the qubit basis and Pauli channel generated by twirling yields a combined Pauli channel, $\mathcal{P}^{\vec{q}_\text{tw}}\left( \mathcal{P}^{\vec{p}_\text{dep}(\pfail)}\left(\bobiqubit\right) \right) = \mathcal{P}^{\vec{p}}_{\check{A}\to\check{B}}\left(\hat{\rho}_{R\check{A}_i}\right)$, where $\vec{p}=(p_I, p_X, p_Y, p_Z)$, $p_I = (1-\frac{3}{4}\pfail)q_I$, $p_i =(1-\frac{3}{4}\pfail)q_i+\frac{1}{4}\pfail$ for $i=X,Y,Z$, and we drop the system labels on the left hand side as they map the qubit system $\check{B}_i$ to itself. 
For every $w(\mathbf{x})\in w(\mathcal{A})$, the hashing bound guarantees the existence of a QECC with an achievable rate of $R_{\rm sr}=1-H(\vec{p})$ \cite[Sec.~24.6.3]{wilde16quantumit2ed}.

\noindent{\bf Covertness analysis:}
Willie knows the construction procedure, the channel parameters, $q$, the covert QECC, and the transmission time.
The sequence of random Pauli operations Alice applies to her encoded state is unknown to Willie. Therefore, from Willie's perspective, Alice's average non-innocent input state given $\mathbf{x}$, and tracing out the $R$ systems is $\hat{\rho}_{A^{w(\mathbf{x})}} =\mathcal{P}^{\vec{p}_\text{dep}(1)}\left(\ketbra{c({\psi})}{c({\psi)}}_{A^{w(\mathbf{x})}}\right) = \hat{\pi}^{\otimes w(\mathbf{x})},$
where $\mathcal{P}^{\vec{p}_\text{dep}(1)}(\cdot)$ is the completely depolarizing channel with $\lambda=1$ over $w(\mathbf{x})$ non-innocent input states.
Thus, Willie observes $\hat{\rho}_{W^n} \equiv \sum_{\mathbf{x}\in\mathcal{A}} p_{\mathbf{X}}\left(\mathbf{x}\right)\bigotimes_{i=1}^{n}\hat{\rho}_{W_i}^{x_i}$
where
$\hat{\rho}_{W_i}^{x_i} 
=\{\hat{\rho}_{W} \text{ if }x_i=1; \hat{\rho}_{W}^{(0)} \text{ if } x_i=0\}$, with $\hat{\rho}_{W}=\mathcal{E}^{\eta,\nb}_{A\to W}\left(\hat{\pi}\right)$ the non-innocent output state.

We upper-bound the QRE between Willie's output $\hat{\rho}_{W^n}$ and the innocent state $\hat{\rho}_{W^n}^{(0)}$ as follows:
\begin{align}
D\left(\hat{\rho}_{W^n} \middle\|\hat{\rho}_{W^n}^{(0)}\right)
&=   D\left( \left(\hat{\bar{\rho}}_{W}\right)^{\otimes n}\middle\|\hat{\rho}_{W^n}^{(0)}\right)+o(1)  \label{eq:Mehrdad-bound}\\
    &= n D\left( \hat{\bar{\rho}}_{W}\middle\|\hat{\rho}^{(0)}_W\right)+o(1)  \label{eq:QRE-additivity}\\
&\le q^2n D_{\chi^2}\left(\hat{\rho}_{W}\middle\|\hat{\rho}^{(0)}_W\right), \label{eq:chi2-bound}
\end{align}
where $\hat{\bar{\rho}}_{W}=(1-q)\hat{\rho}^{(0)}_{W} + q \hat{\rho}_{W}$, \eqref{eq:Mehrdad-bound} is from the adaptation in Appendix~\ref{app:sparse-signaling-covertness} of the covertness analysis for sparse signaling approach from 
\cite{tahmasbi21signalingcovert},
\eqref{eq:QRE-additivity} is from the additivity of the QRE over product states \cite[Ex.~11.8.7]{wilde16quantumit2ed}, and
\eqref{eq:chi2-bound} is by \cite[Lemma 1]{bullockCovertCommunicationClassicalQuantum2023} and \cite[Eq.~(9)]{ruskai1990convexity} for large enough $n$. Thus, the right-hand side of the covertness requirement 
is bounded by \eqref{eq:chi2-bound}, and Alice maintains covertness by choosing $q\leq \sqrt{2}c_\text{cov}\sqrt{\frac{\delta_{\rm C}}{n}}$, where $c_\text{cov}$ is in \eqref{eq:ccov} and $D_{\chi^2}\left(\hat{\rho}_{W}\middle\|\hat{\rho}^{(0)}_W\right)$ is derived in Appendix~\ref{app:chi-square}. 
\end{IEEEproof}
\begin{remark}The proof of this lemma has been generalized in  \cite{anderson2025covert-isit} to arbitrary channels under appropriate assumptions on the channel and Willie’s output state. These assumptions apply to the bosonic channel and are used directly in Lemma \ref{lemma:sr-no-locc} proof.
\end{remark}

\begin{lemma}\label{lemma:dr-no-locc}
Using a dual-rail photonic qubit, for $n$ large enough and arbitrary $\vartheta>0$, at least $(1-\vartheta)\sqrt{n}\frac{c_{\mathrm{cov}}}{\sqrt{2}}R_{\rm dr}\sqrt{\delta_{\rm C}}$ ebits can be transmitted reliably and covertly over $n$ bosonic channel uses, where $c_\mathrm{cov}$ is in \eqref{eq:ccov}, and $\delta_{\rm C}$ is the covertness constraint. $R_{\rm dr}\geq\left[1-H(\vec{p})\right]^+$ is the constant achievable rate of reliable ebit transmission per round, where $\vec{p}=\left[1-\frac{3p}{4},\frac{p}{4},\frac{p}{4},\frac{p}{4}\right]$, and $p=1-\frac{\eta}{(1+(1-\eta)\nb)^4}$.
\end{lemma}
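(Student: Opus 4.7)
The plan is to parallel the proof of Lemma \ref{lemma:sr-no-locc}, replacing its single-mode per-qubit structure by the two-mode per-qubit structure of the dual-rail encoding and tracking how this modification propagates through the reliability and covertness calculations. Alice prepares Bell pairs and encodes her share of each using the dual-rail pattern $\alpha\ket{01}+\beta\ket{10}$, while the sparse-signaling binary sequence $\mathbf{x}$ from the single-rail proof is re-interpreted as indexing $\lfloor n/2\rfloor$ two-mode slots, with inactive slots consisting of vacuum $\ket{00}$ on both modes, which is the natural innocent input replicated across both modes. The covert QECC plus Pauli-twirl construction then carries over verbatim to the $w(\mathbf{x})$ active slots.

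The qubit-level demodulation is where the two-mode structure matters: on each active slot Bob projects the received two-mode state onto $\hat{\Pi}_{\rm DR}=\ketbra{01}{01}+\ketbra{10}{10}$ and substitutes the maximally mixed qubit on projection failure. I would compute the induced single-qubit channel by evaluating the Fock-basis matrix elements of the thermal-loss channel acting on $\ketbra{0}{0}$, $\ketbra{1}{1}$, and $\ketbra{0}{1}$, each of which factorizes across the two independent modes via the Stinespring dilation $V^{\eta,\nb}_{A\to BWE}$. The resulting Bloch-map diagonal entries evaluate to $\lambda_x=\lambda_y=\eta/(1+(1-\eta)\nb)^3$ and $\lambda_z=\eta/(1+(1-\eta)\nb)^4$, yielding, after Pauli twirling, a Pauli (but not exactly depolarizing) channel with $p_X=p_Y$ and slightly larger $p_I$ than the depolarizing reference. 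Majorizing this distribution in entropy by the depolarizing channel with parameter $p=1-\eta/(1+(1-\eta)\nb)^4$ then yields the hashing-bound \cite[Sec.~24.6.3]{wilde16quantumit2ed} achievable rate $R\geq[1-H(\vec{p})]^+$ claimed in the statement.

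For the covertness analysis I would reuse the chain \eqref{eq:Mehrdad-bound}--\eqref{eq:chi2-bound} with a per-slot rather than per-mode target state. The twirl flattens Alice's average non-innocent input per active slot to $\hat{\pi}_{\rm DR}=\tfrac{1}{2}(\ketbra{01}{01}+\ketbra{10}{10})$. Since the thermal-loss images of $\ketbra{0}{0}$ and $\ketbra{1}{1}$ are Fock-diagonal on Willie's side and hence mutually commuting, a direct $D_{\chi^2}$ computation on the two-mode output state gives exactly twice the per-mode single-rail $\chi^2$ value per slot, while the number of slots is $\lfloor n/2\rfloor$ rather than $n$. These two factors cancel in the sparse-signaling Pinsker bound, so the same maximal sparsity $q\leq\sqrt{2}\,c_{\rm cov}\sqrt{\delta_{\rm C}/n}$ is admitted as in the single-rail proof. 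Combining with $M(n)=R\,w(\mathbf{x})\approx Rq\lfloor n/2\rfloor$ yields the claimed bound, the factor-of-two penalty relative to Lemma \ref{lemma:sr-no-locc} arising entirely from the two-modes-per-qubit overhead.

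The principal obstacle is the Fock-basis matrix-element bookkeeping that produces the clean form $p=1-\eta/(1+(1-\eta)\nb)^4$: one must combine the diagonal and off-diagonal thermal-loss matrix elements on the two modes in the right tensor-product pattern, and then verify that bounding the true (non-depolarizing) twirled Pauli channel by the stated depolarizing one preserves hashing-bound achievability via an entropy-majorization argument. A secondary nuisance is confirming that the sparse-signaling covertness adaptation from \cite{tahmasbi21signalingcovert} used in Lemma \ref{lemma:sr-no-locc} applies verbatim when the innocent per-slot input is a two-mode vacuum, which is immediate from the tensor-product structure of the innocent state but warrants an explicit statement in the proof.
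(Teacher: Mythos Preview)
Your approach is essentially the paper's: rerun the single-rail construction and analysis with two-mode slots, two-mode vacuum as the innocent input, and the halving of the slot count supplying the $1/\sqrt{2}$ penalty. Your covertness argument via the identity $D_{\chi^2}(\hat{\rho}_{W^2}^{(\hat{\pi}_{\rm DR})}\|(\hat{\rho}_W^{(0)})^{\otimes 2})=2\,D_{\chi^2}(\hat{\rho}_W^{(\hat{\pi})}\|\hat{\rho}_W^{(0)})$ is correct and is a clean way to reach the same $q$ constraint; the paper instead tersely says to redo the Appendix~\ref{app:chi-square} calculation on the dual-rail input.

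There is, however, a computational slip on the reliability side. Writing $G=1+(1-\eta)\nb$, the single-mode coherence transfer is $\langle 0|\mathcal{N}(|0\rangle\langle 1|)|1\rangle=\sqrt{\tau}/G^{3/2}=\sqrt{\eta}/G^{2}$, so the dual-rail off-diagonal survival is $(\sqrt{\eta}/G^2)^2=\eta/G^4$, giving $\lambda_x=\lambda_y=\eta/G^4$, not $\eta/G^3$. A short calculation of the diagonal terms shows $\lambda_z=a-b=\eta/G^4$ as well. Hence the projected-plus-replace-on-failure channel, after twirling, is \emph{exactly} the depolarizing channel with parameter $p=1-\eta/G^4$ stated in the lemma; your entropy-majorization detour is unnecessary. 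Your proof would still go through as written (the majorization only weakens the bound toward the stated one), but fixing the Bloch-map values removes the ``principal obstacle'' you flagged and matches the lemma directly.
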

\begin{IEEEproof}
    The construction in the dual-rail setting is the same as in Fig.~\ref{fig:construction-sr-no-locc}, except that Alice's QECC uses dual-rail encoding, and  $|00\rangle\langle00|^{\otimes\frac{n-w(\mathbf{x})}{2}}$ is used for sparse coding. The proof follows directly from the proof of Lemma~\ref{lemma:sr-no-locc}, where the steps in Appendix~\ref{app:chi-square} are performed on the arbitrary dual-rail input state (see \cite[Lem.~3]{anderson2024covert-qce} for the corresponding bound on $\chi^2$-divergence). The additional $\frac{1}{\sqrt{2}}$ term arises because the analysis is performed on a single photon occupying two modes, and, hence, two channel uses. Calculating the Pauli twirling parameters is not required as successful projection onto the dual-rail basis yields a depolarized state.
\end{IEEEproof}

\begin{remark} Pauli twirling is often used to approximate noise in non-Pauli channels. Indeed, we use this to transform an arbitrary quantum channel into a Pauli channel for reliability analysis. However, here, Pauli twirling also benefits the covertness analysis, ensuring that, on average, the non-innocent input state appears maximally mixed to Willie. 
\end{remark}

\section{Performance Analysis}\label{sec:performance}
\begin{figure*}[htb]
\centering
\subfloat[$\eta = 0.95$]{%
    \includegraphics[width=0.31\textwidth]{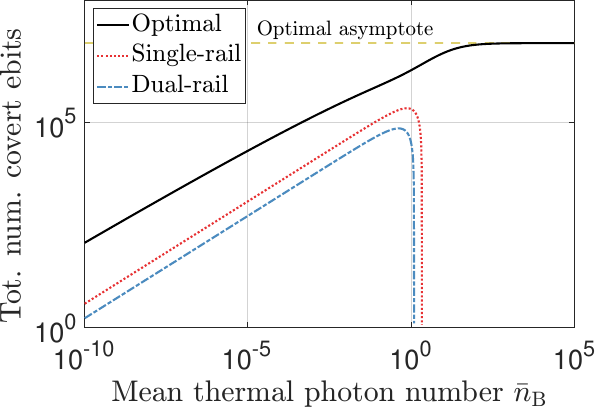}
    \label{fig:fixed-eta-rate-1}
}
\hfill
\subfloat[$\eta = 0.8$]{%
    \includegraphics[width=0.31\textwidth]{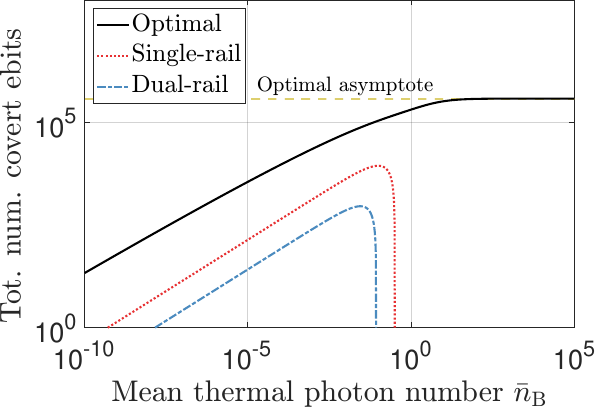}
    \label{fig:fixed-eta-rate-2}
}
\hfill
\subfloat[$\eta = 0.65$]{%
    \includegraphics[width=0.31\textwidth]{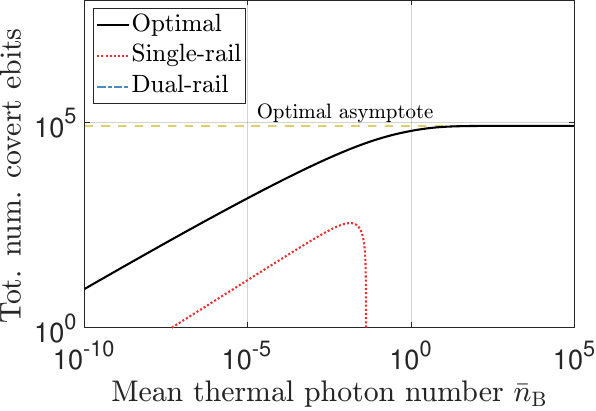}
    \label{fig:fixed-eta-rate-3}
}
\caption{Total number of covert ebits as a function of $\nb$ for a) $\eta = 0.95$, b) $\eta = 0.8$, and c) $\eta = 0.65$. 
In each subfigure, $n = 10^8$ and $\delta = 0.05$. 
The black line is the optimal rate from Theorem~\ref{thm:covert-capacity}, while the dotted red and dash-dotted blue lines correspond to the single- and dual-rail rates. 
The yellow dashed line shows the asymptotic convergence of the optimal rate as $\nb \to \infty$.}
\label{fig:fixed-eta-rates}
\end{figure*}

In Fig.~\ref{fig:fixed-eta-rates}, we plot the bounds for the total number of reliably-generated covert ebits vs.~mean thermal photon number $\nb$ for $\delta = 0.05$, $n=60\times 10^9$, and various values of transmittance $\eta$.
The black line is the optimal rate from Theorem~\ref{thm:covert-capacity}, while the dotted red and dash-dotted blue lines are the single- and dual-rail strategies. The yellow dashed line represents the limit of the optimal rate as $\nb\to \infty$. The single- and dual-rail strategies perform poorly relative to the optimal scheme, with the gap widening as $\eta$ decreases. Additionally, Fig.~\ref{fig:fixed-eta-rates} shows that there is a transmittance-dependent maximum thermal photon number beyond which neither single- nor dual-rail schemes work, motivating further exploration of practical schemes to close the gap.

\begin{figure*}[htb]
\centering
\subfloat[$\nb = 10^{-6}$]{%
    \includegraphics[width=0.31\textwidth]{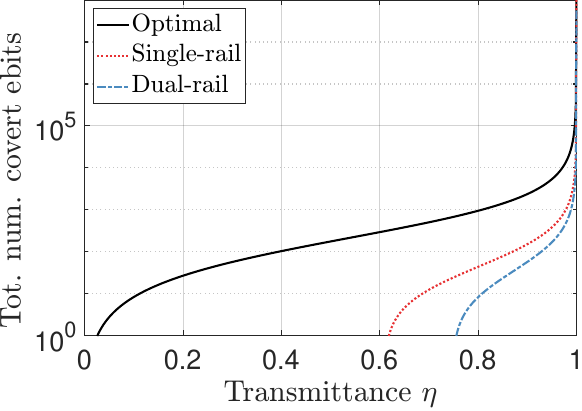}
    \label{fig:fixed-nb-rate-1}
}
\hfill
\subfloat[$\nb = 10^{-3}$]{%
    \includegraphics[width=0.31\textwidth]{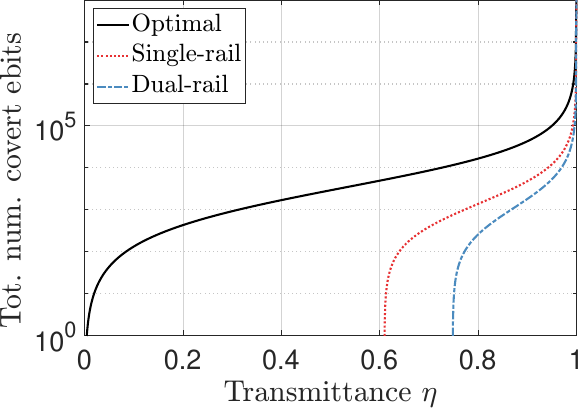}
    \label{fig:fixed-nb-rate-2}
}
\hfill
\subfloat[$\nb = 10^{-1}$]{%
    \includegraphics[width=0.31\textwidth]{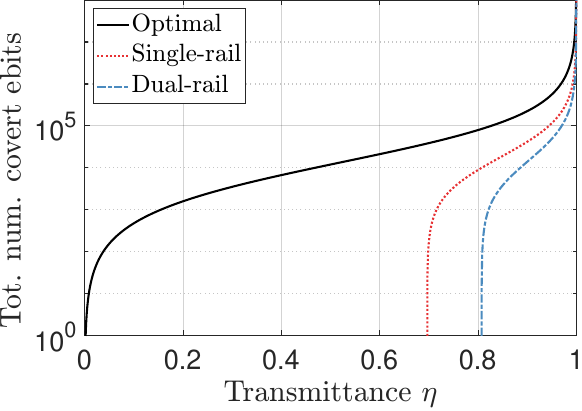}
    \label{fig:fixed-nb-rate-3}
}
\caption{Total number of covert ebits as a function of $\eta$ for a) $\nb=10^{-6}$, b) $\nb=10^{-3}$, and c) $\nb=10^{-1}$, where $n=60\times10^9$ and $\delta=0.05$ for each. 
The black line is the optimal rate from Theorem~\ref{thm:covert-capacity}, while the dotted red and dash-dotted blue lines correspond to the single- and dual-rail rates.}
\label{fig:fixed-nb-rates}
\end{figure*}

In Fig.~\ref{fig:fixed-nb-rates}, we plot the bounds for the total number of reliably generated covert ebits versus transmittance $\eta$ for $\delta = 0.05$, $n=60\times 10^9$, and various values of mean thermal photon number $\nb$. 
The black line is the optimal rate from Theorem~\ref{thm:covert-capacity}, while the dotted red and dash-dotted blue lines are the single- and dual-rail strategies. For low transmittance, the single- and dual-rail rates fail to generate any entangled pairs, whereas the optimal rate remains non-zero for all values of $\eta$. Furthermore, even for high transmittance, a near-order-of-magnitude gap persists; this gap remains even in the limit of $\eta\to1$.

\begin{figure}[htb]
\centering
\includegraphics[width=\columnwidth]{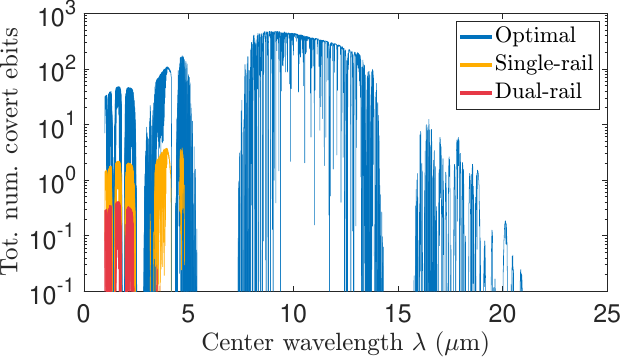}
\caption{The total number of covert ebits that can be reliably transmitted with a bandwidth of $W=10$ GHz and time $T=60$ seconds vs.~transmission center wavelength over a free-space optical link between Alice and Bob described in Section~\ref{sec:performance}. The blue line is the optimal bound from Theorem~\ref{thm:covert-capacity}. The yellow and red lines represent the single- and dual-rail strategies.}
\label{fig:modtran}
\end{figure}

In Fig.~\ref{fig:modtran} we plot the bounds for the total number of reliably generated covert ebits (entangled qubit pairs) for a free-space optical (FSO) link between Alice and Bob, who have an $L=1$ km line-of-sight channel, and aperture radii $r=10$ cm.  We set the signaling bandwidth $W=1$ GHz. Thus, over the transmission duration of $T$ seconds, we have $n=TW=10^{9}T$ optical modes. We set $T=60$s and $\delta =0.05$.  As in \cite{bash15covertbosoniccomm}, we employ a detailed MODTRAN `Mid-Latitude Summer (MLS)' atmospheric model \cite{berk06MODTRAN} for an FSO channel at elevation $10$ m above the ground level with visibility $23$ km in clear weather. We obtain the mean background thermal noise photon number from the total radiance at $60^\degree$ solar elevation.
Unlike \cite{bash15covertbosoniccomm}, we only consider the fundamental transverse electromagnetic (TEM00) spatial mode. Higher-order spatial modes do not substantially improve covert classical communication \cite[Fig.~5]{bash15covertbosoniccomm}, and, as loss increases with higher spatial-mode order, we anticipate the performance gains to be even more modest for covert entanglement generation. Per the MODTRAN model, the transmittance $\eta$ and thermal-noise photon number $\nb$ depend on the center wavelength. The blue line plots the covert capacity given in Theorem~\ref{thm:covert-capacity} while the red and orange lines correspond to the single- and dual-rail rates. We see only a small range, corresponding to high transmittance, where the single- and dual-rail strategies generate a nonzero number of ebits.

Fig.~\ref{fig:fixed-eta-rates}, Fig.~\ref{fig:fixed-nb-rates}, and Fig.~\ref{fig:modtran} show a substantial gap between the optimal rate and the rates achievable with practical strategies.  There are many potential reasons for this. First, although the channel is infinite-dimensional, we do not utilize the additional degrees of freedom. It is possible that qudit or bosonic codes could increase the achievable rates. Indeed, Gottesman-Kitaev-Preskill (GKP) qubits \cite{gottesmanEncodingQubitOscillator2001} have been shown to reach a constant factor gap to the capacity of the pure-loss channel, and perform well in the lossy thermal-noise setting \cite{noh2018capacitybounds}. However, their high photon number per mode may challenge covertness: note that the optimal scheme in Section \ref{subsec:cov-eg} spreads a small amount of power across many modes. Additionally, our strategies rely on Pauli twirling, which, by the quantum data processing inequality \cite[Ch.~10.7.2]{wilde16quantumit2ed}, can only decrease the mutual information between Alice and Bob. Lastly, to ensure that the hashing bound can be used, when projection fails for a given state, we replace it with a maximally mixed state, discarding information about where the error occurred. This can instead be treated as an erasure, aiding in decoding.

\section{Conclusion} \label{sec:conclusion} 
We explore covert entanglement generation over the bosonic channel. We obtain a single-letter expression for the optimal covert entanglement generation capacity $L_{\rm EG}$ and find that it is the same as the classical covert capacity over the bosonic channel derived in \cite{gagatsos20codingcovcomm}.  
Further, we investigate covert entanglement generation using photonic qubits and find substantial room for improvement.
This and much more is left for future investigation. 

While both our achievable covert secrecy and entanglement generation schemes have key requirements that scale as $\mathcal{O}\left(\sqrt{n}\right)$ bits, we find that the latter has a larger scaling factor when Bob has a better channel than Willie. Intuitively, this is caused by our need to fully decouple Bob's decoded states from both Willie and the environment when generating entanglement. However, it is unclear whether the extra key is necessary. To resolve this, we need a converse result for the key requirement. Adapting the approach in \cite{kimelfeld25ceg} for finite-dimensional channels to the bosonic channel is a promising path towards this. Further, our proposed practical schemes using single- and dual-rail qubits require $\mathcal{O}(\sqrt{n}\log n)$ bits due to sparse coding. We will seek to improve this in the future. 

Our single- and dual-rail results apply in the limit of asymptotically large random stabilizer codes. This requires Alice and Bob to store large numbers of physical qubits in quantum memories and employ large Clifford circuits. While we anticipate success with high-rate QECCs, e.g., quantum low-density parity-check (LDPC) codes~\cite{breuckmann2021quantum}, covert entanglement generation also needs to be investigated under practical constraints on quantum memory and circuit sizes.

\section*{Acknowledgement}
The authors are grateful to Mehrdad Tahmasbi for providing the details on the sparse coding analysis in \cite{tahmasbi21signalingcovert}, which formed the basis for Appendix \ref{app:sparse-signaling-covertness}. The authors also benefited from discussions with Matthieu R.~Bloch, Christos N.~Gagatsos, Brian J.~Smith, Ryan Camacho, Narayanan Rengaswamy, Kenneth Goodenough, and Saikat Guha.

\appendices

\section{Reliable, Covert, and Secret Random Coding Scheme}\label{ap:cov-sec-bosonic}
\begin{IEEEproof}[Proof: Lemma \ref{lem:secrecyach}]
Alice and Bob employ the construction in the proof of Theorem \ref{thm:covert-capacity-secrecy}. 
Thus, a combination of Lemma \ref{lem:PBCSQ} in Appendix \ref{ap:oneshotproof}, \cite[Cor. 11]{wilde19secondorderqkd} and  \cite[Eq. (12)]{gagatsos20codingcovcomm} gives us $
        E_\mathcal{C}\left[\bar{P}_{\rm e}(\mathcal{C})\right]\leq\epsilon$ and 
        $E_{\mathcal{C}}\left[\frac{1}{2}\left\|\hat{\bar{\rho}}^m_{W^n}-\hat{\breve{\rho}}^{\otimes n}_{W}  \right\|\right]\leq\kappa-\gamma_2, \text{ for every $m$,}$ 
    if
    \begin{align}
        \log |\mathcal{M}| &\geq n D(\hat{\rho}_{X B}\|\hat{\rho}_X\otimes\hat{\rho}_B)\nonumber \\ &\phantom{\leq} + \sqrt{nV(\hat{\rho}_{X B}\|\hat{\rho}_X\otimes\hat{\rho}_B)}\Phi^{-1}(\epsilon-C^{(1)}_n) \\
        \log |\mathcal{K}| &\leq n D(\hat{\rho}_{X W}\|\hat{\rho}_X\otimes\hat{\rho}_{W}) \nonumber \\ &\phantom{\leq}- \sqrt{nV(\hat{\rho}_{X W}\|\hat{\rho}_X\otimes\hat{\rho}_{W})}\Phi^{-1}\left(\frac{\kappa^2}{10}- C^{(2)}_n\right)
    \end{align}
    where $C^{(1)}_n = \gamma_1 + \frac{C_{\rm{BE}}T(\hat{\rho}_{X B}\|\hat{\rho}_X\otimes\hat{\rho}_B)}{\sqrt{nV(\hat{\rho}_{X B}\|\hat{\rho}_X\otimes\hat{\rho}_B)}}$ and $C^{(2)}_n = \gamma_2 + \gamma_3+ \frac{C_{\rm{BE}}T(\hat{\rho}_{X W E}\|\hat{\rho}_X\otimes\hat{\rho}_{WE})}{\sqrt{nV(\hat{\rho}_{X W E}\|\hat{\rho}_X\otimes\hat{\rho}_{WE})}}$ and $C_{\rm{BE}}$ is the Berry-Esseen constant. Now, setting $\epsilon=e^{-\varsigma_n^{(1)\sqrt{n}}}$ for $\varsigma_n^{(1)}\in \omega\left(\frac{1}{\sqrt{n}}\right)\cap o\left(1\right)$, $\kappa=e^{-\varsigma_n^{(3)}\sqrt{n}}$ for $\varsigma_n^{(3)}\in o(1)\cap \omega\left(\frac{1}{\sqrt{n}}\right)$, and $\gamma_1 \in o \left(e^{-\varsigma_n^{(1)}\sqrt{n}}\right)$ and $\gamma_2 = \gamma_3 \in o\left(e^{-\varsigma_n^{(3)}\sqrt{n}}\right)$, we have by \cite[Lem.~1, Lem.~2, Thm.~1]{gagatsos20codingcovcomm}, \cite[Eq.~(9)]{wang22isitcoverttd}, $\varsigma_n\in o(1)$, and for $n$ large enough 
    \begin{align}
            \log |\mathcal{M}| &\geq(1-\varsigma_n) n D(\hat{\rho}_{X B}\|\hat{\rho}_X\otimes\hat{\rho}_B)\\
            &= (1-\varsigma_n) n \eta\log\left(1+\frac{1}{(1-\eta)\bar{n}_b}\right)\ns\\
    \log |\mathcal{K}|&\leq (1+\varsigma_n)n D(\hat{\rho}_{X W }\|\hat{\rho}_X\otimes\hat{\rho}_{W})\\
		      &\leq (1+\varsigma_n)n\left(1-\eta\right)\log\left(1+\frac{1}{\eta \bar{n}_b}\right)\ns\label{eq:keybound}
    \end{align}
    while $         E_\mathcal{C}\left[\bar{P}_{\rm e}(\mathcal{C})\right] \leq e^{-\varsigma_n^{(1)}\sqrt{n}}$ and 
    \begin{align}
         \max_m E_\mathcal{C} \left[\frac{1}{2}\left\|\hat{\bar{\rho}}^m_{W^n}-\hat{\breve\rho}_{W}^{\otimes n}\right\|\right]&\leq e^{-\varsigma_n^{(3)}\sqrt{n}}\label{eq:semanticsecpf}.
    \end{align}
    Now, note that 
\begin{align}
    \nonumber E_\mathcal{C}&\left[\left|D(\hat{\bar\rho}_{W^n}\|\hat{\breve\rho}_W^{\otimes n})-D(\hat{\breve\rho}_W^{\otimes n}\|\hat{\rho}_{W^n}^{(0)})\right|\right]\\
    &= E_\mathcal{C}\left[\left|\left(S\left(\hat{\breve{\rho}}_{W^n}^{\otimes n}\right)-S\left(\hat{\bar\rho}_{W^n}\right)\right)\right.\right.\nonumber\\&\phantom{=E_C[|]}\left.\left.+\tr\left(\left(\hat{\breve{\rho}}_{W^n}^{\otimes n}-\hat{\bar\rho}_{W^n}\right)\log\left(\hat{\rho}_{W^n}^{(0)}
\right)\right)\right|\right].\label{eq:covbnd1}
\end{align}

Let $E = \max\left(n\ns,n\bar{n}_b\right) \in \mathcal{O}(n)$. 
By \cite[Lem.~15]{winter16cont},
\begin{align}
        \left|S\left(\hat{\breve \rho}_{W}^{\otimes n}\right)-S\left(\hat{\bar\rho}_{W^n}\right)\right|&\leq T_W n S_{\rm max}\left(\frac{2E}{n T_W}\right) + h\left( \frac{T_W}{2}\right) \nonumber\\ 
        &\in \mathcal{O}\left(\left(T_Wn\right)^2\log\left(1/T_W\right)\right)\label{eq:Scontinuity}
\end{align}
where, $T_W\equiv\left\|\hat{\bar \rho}_{W^n}-\hat{\breve{\rho}}^{\otimes n}_W \right\|$, $S_{\rm max}(E)< \infty$ is the maximum entropy under energy constraint $E<\infty$ and \eqref{eq:Scontinuity} follows from \cite[Rem.~13]{winter16cont}, \cite[Prop.~1(ii)]{Shirokov06entropychar}. Thus, \eqref{eq:semanticsecpf} implies
$    E_{\mathcal{C}}\left[\left|S\left(\hat{\breve \rho}_{W}^{\otimes n}\right)-S\left(\hat{\bar\rho}_{W^n}\right)\right|\right] \leq e^{-\gamma_n^{(2)}\sqrt{n}}
$
for some $\gamma_n^{(2)}\in o(1)\cap\omega\left(\frac{1}{\sqrt{n}}\right)$.
Finally, we examine the second term in the expectation on the right-hand side of \eqref{eq:covbnd1}:

\begin{align}
\tr\left(\hat{\Delta}_W\log\left(\hat{\rho}_{W^n}^{(0)}
\right)\right)&=\tr\left(\hat{\Delta}_W\sum_{i=1}^n\log\left(\frac{e^{-\beta(\eta \bar{n}_b) \hat{N}_{W_i}}}{Z(\beta(\eta \bar{n}_b))}\right)\right)\nonumber\\
&=\tr\left(\hat{\Delta}_W\right) n\log\left(\frac{1}{Z(\beta(\eta \bar{n}_b))}\right) \nonumber \\ &\phantom{=}-\beta(\eta \bar{n}_b) \tr\left[\sum_{i=1}^n\hat{\Delta}_W\hat{N}_{W_i}\right]\nonumber\\
&=0. \label{eq:samepower}
\end{align}
where $\hat{\Delta}_W\equiv\hat{\breve \rho}_{W}^{\otimes n}-\hat{\bar\rho}_{W^n}$ and \eqref{eq:samepower} follows from $\hat{\bar \rho}_{W^n}$ being a QPSK-modulated coherent state codeword with mean photon number per mode equal to that of $\hat{\breve \rho}_{W}^{\otimes n}$.  
\end{IEEEproof}

\section{Derandomization of Covert Secret Coding Scheme}\label{ap:lemderandom}
\begin{IEEEproof}[Proof: Lemma \ref{lem:expurge}]
    We follow the standard derandomization procedure. Define the events
    $A_{1} \triangleq \left\{\bar{P}_{\rm e}\leq e^{-\zeta_n^{(1)}\sqrt{n}}\right\}$,
    $A_{2} \triangleq \left\{\left|D(\hat{\bar\rho}_{W^n}\|\hat{\breve\rho}_W^{\otimes n}) - D(\hat{\breve\rho}_W^{\otimes n}\|\hat{\rho}_{W^n}^{(0)})\right| \leq e^{-\zeta_n^{(2)}\sqrt{n}}\right\}$, and 
    $A_{3} \triangleq \left\{\frac{1}{|\mathcal{M}|} \sum_{m} \frac{1}{2}\left\|\hat{\bar{\rho}}^m_{W^n} - \hat{\breve\rho}_{W}^{\otimes n}\right\|_1 \leq e^{-\zeta_n^{(3)}\sqrt{n}}\right\}$.
Now, Lemma \ref{lem:secrecyach}, the union bound, and Markov's inequality imply
\begin{align}
    \Pr\left(\bigcap_{i=1}^3 A_i\right) &\geq 1 - \sum_{i=1}^3 Pr(A_i)
    \geq 1 - \sum_{i=1}^3 e^{-\left(\varsigma^{(i)}_n - \zeta_n^{(i)}\right)\sqrt{n}}  \label{eq:markovs}
\end{align}
Thus, for some $\epsilon\in(0,1)$, setting $\zeta^{(k)}_n=(1-\epsilon)\varsigma^{(k)}_n$ for $k=\{1,2,3\}$ yields the right hand side of \eqref{eq:markovs} converging to unity in the limit of $n\to \infty$. Therefore, we conclude that there exists at least one coding scheme for $n$ sufficiently large that satisfies 
\begin{align}
       \bar{P}_{\rm e}(\mathcal{C}) &\leq e^{-\zeta_n^{(1)}\sqrt{n}}\label{eq:detrel} \\
       \left|D(\hat{\bar\rho}_{W^n}\|\hat{\breve\rho}_W^{\otimes n})-D(\hat{\breve\rho}_W^{\otimes n}\|\hat{\rho}_{W^n}^{(0)})\right| \label{eq:detcov}&\leq e^{-\zeta_n^{(2)}\sqrt{n}}\\
    \frac{1}{|\mathcal{M}|}\sum_{m}\left\|\hat{\bar{\rho}}^m_{W^n}-\hat{\breve\rho}_{W}^{\otimes n}\right\|_1&\leq e^{-\zeta_n^{(3)}\sqrt{n}}\label{eq:detsec}
\end{align}
Lemma \ref{lem:expurge} follows by applying \eqref{eq:detrel}-\eqref{eq:detsec} to the proof of \cite[Lem.~10]{kimelfeld25ceg}.
\end{IEEEproof}

\section{Reliable and Covert Entanglement Generation}\label{ap:entanglementgeneration}
\begin{IEEEproof}[Proof: Lemma \ref{lem:entanglementgeneration}]
We employ the strategy described in the proof of \cite[Thm.~2]{kimelfeld25ceg}. The complete analysis of the strategy for the bosonic channel is presented below for completeness. \\
\noindent{\bf State Approximation:} Consider the global state in \eqref{eq:globaldecoded}. We now state a lemma that allows us to approximate our global output state following \cite[Thm.~2]{kimelfeld25ceg}:
\begin{lemma}\label{lem:adaptlem14}\cite[Lem.~14]{kimelfeld25ceg}
    For $\ket{\tau}_{RMB^nW^nE^n\check{M}}$ defined in \eqref{eq:globaldecoded}, there exist phases $\{g(m\oplus k_2,k_1)\}$ such that $\ket{\tau}_{RMB^nW^nE^n\check{M}}$ can be approximated by 
    \begin{align}
            &\ket{\chi}_{RMB^nW^nE^n\check{M}\check{K}_1\hat{K_2}}\nonumber\\
    &=\frac{1}{\sqrt{|\mathcal{M}||\mathcal{K}_1||\mathcal{K}_2|}}\sum_{m,k_1,k_2}\ket{m}_M\otimes \ket{m}_R\nonumber\\&\phantom{==}\otimes e^{ig(m\oplus k_2,k_1)}\ket{x^n(m\oplus k_2,k_1)}_{B^nW^nE^n}
    \otimes\ket{m}_{\hat M}\nonumber \\
    &\phantom{==} \otimes \ket{k_1}_{\check{K}_1}\otimes \ket{k_2}_{\check{K}_2}\label{eq:approxdecoded}
    \end{align}
    where $\ket{x^n(m,k)}_{B^nW^nE^n}$ $ = V^{\eta,\bar{n}_b\phantom{}^\otimes n}_{A\to BWE}(\ket{x_{\rm coh}^n(m,k)}_{A^n})$
    and while $\|\ketbra{\tau}{\tau}_{RMB^nW^nE^n\check{M}}-\ketbra{\chi}{\chi}_{RMB^nW^nE^n\check{M}}\|_1\leq 2\sqrt{2}e^{-\varsigma_n^{(1)}\sqrt{n}}$.
\end{lemma}
The proof of \cite[Lem.~14]{kimelfeld25ceg} found in \cite[Appendix A]{kimelfeld25ceg} must be modified since our codewords are coherent states which are not orthogonal. Orthogonality is used in an intermediate step in the proof of \cite[Lem.~A.0.3]{wilde16quantumit2ed}. However, this proof, in fact, holds for well-behaved non-orthogonal states, as shown for completeness in Appendix \ref{ap:parsevals}.
\\
\noindent{\bf Decoupling:}
Note that we can re-express \eqref{eq:approxdecoded} by altering the system ordering as 
\begin{align}
    &\ket{\chi}_{RM\check{M}B^nW^nE^n\check{K}_1\hat{K_2}} \nonumber \\
    &=\frac{1}{\sqrt{|\mathcal{M}|}}\sum_m \ket{m}_R \otimes \ket{m}_M \otimes\ket{m}_{\check{M}}\otimes\ket{\chi^m}_{B^nW^nE^n\check{K}_1\hat{K_2}},
\end{align}
where 
\begin{align}
    &\ket{\chi^m}_{B^nW^nE^n\check{K}_1\hat{K_2}} \nonumber \\
    &= \frac{1}{\sqrt{|\mathcal{K}_1||\mathcal{K}_2|}}\sum_{k_1,k_2} e^{i g(m\oplus k_2,k_1)}\ket{x^n(m\oplus k_2, k_1)}_{B^nW^nE^n}\nonumber\\ &\phantom{=------=}\otimes\ket{k_1}_{\check{K}_1}\otimes\ket{k_2}_{\check{K}_2}
\end{align}
Bob applies the isometry $\Gamma^m_{\check{K}_2\to\check{K}_2}$ that maps $\ket{k_2}\to\ket{m\oplus k_2}$:
\begin{align}
    &\left(\hat{I}_{B^nW^nE^n\check{K}_1}\otimes \Gamma^m_{\check{K}_2\to\check{K}_2}\right) \ket{\chi^m}_{B^nW^nE^n\check{K}_1\hat{K_2}}\nonumber \\ 
    &= \frac{1}{\sqrt{|\mathcal{K}_1||\mathcal{K}_2|}}\sum_{k_1,k_2} e^{i g(m\oplus k_2,k_1)}\ket{x^n(m\oplus k_2, k_1)}_{B^nW^nE^n}\nonumber\\ &\phantom{=------=}\otimes\ket{k_1}_{\check{K}_1}\otimes\ket{m\oplus k_2}_{\check{K}_2}\\
    &= \frac{1}{\sqrt{|\mathcal{K}_1||\mathcal{K}_2|}}\sum_{k_1,k_2} e^{i g(k_2,k_1)}\ket{x^n( k_2, k_1)}_{B^nW^nE^n}\nonumber\\ &\phantom{=------=}\otimes\ket{k_1}_{\check{K}_1}\otimes\ket{k_2}_{\check{K}_2},\\
    &=\ket{\chi}_{B^nW^nE^n\check{K}_1\check{K_2}}
\end{align}
which no longer depends on $m$. Thus,  
\begin{align}
&\left(\hat{I}\otimes\Gamma^m_{\check{K}_2\to\check{K}_2}\right)\ket{\chi}_{RM\check{M}B^nW^nE^n\check{K}_1\hat{K_2}} \nonumber \\
&\phantom{=========}=\ket{\rm GHZ} \otimes \ket{\chi}_{B^nW^nE^n\check{K}_1\check{K_2}},
\end{align}
and Bob can simply trace out the $B^n\check{K}_1\check{K}_2$ systems.

\noindent{\bf Obtaining Bipartite Entanglement Without Assistance:} As in \cite{kimelfeld25ceg}, we convert the GHZ state into bipartite entanglement using a classical link and then show that the classical link is not required. We follow the same procedure as \cite[Secs.~VI.B.4, VI.B.5]{kimelfeld25ceg} with appropriate substitution of reliability bounds. \\
\noindent{\bf Covertness:}
Covertness follows from \eqref{eq:detmodcovexp} in  Lemma \ref{lem:detmodsec}.
\end{IEEEproof}

\section{One-shot Secrecy Coding Scheme}\label{ap:oneshotproof}
\begin{lemma}\label{lem:PBCSQ} \cite[Lem.~V.1]{wang22isitcoverttd}
    Consider bipartite classical-quantum state $\hat{\rho}_{XA}$ and a channel $\mathcal{N}_{XA\to XBW}$. Then, for constants $\epsilon \in (0,1)$, $\kappa\in(0,\delta/2)$, $\gamma_1\in\left(0, \frac{\epsilon^2}{10}\right)$, $\gamma_2\in\left(0, \frac{\kappa}{2}\right)$, and $\gamma_3\in\left(0, \frac{\kappa}{2}-\gamma_2\right)$, there exists a coding scheme such that 
    \begin{align}
        \log |\mathcal{M}| &\geq D_{H}^{\epsilon^2/10-\gamma_2}(\hat{\rho}_{XB}\|\hat{\rho}_X\otimes \hat{\rho}_B)-\log\frac{4\epsilon^2}{10 \gamma_1^2}, \\
        \log |\mathcal{K}|&\leq D_\mathrm{max}^{\kappa/2-\gamma_2-\gamma_3}(\hat{\rho}_{XW}\|\hat{\rho}_X\otimes \hat{\rho}_W)+2\log\frac{2\sqrt{2}}{\gamma_2\gamma_3} \label{eq:keybound1}
    \end{align}
    that satisfies $
        E_{\mathcal{C},K}\left[\bar{P}^K_{\rm e}(\mathcal{C)}\right]\leq \frac{\epsilon^2}{10}$ and $
        \max_m E_{\mathcal{C}}\left[\frac{1}{2}\left\|\hat{\bar{\rho}}^m_W-\hat{\breve{\rho}}_W  \right\|_1\right] \leq\kappa-\gamma_2$,
    where $D_{H}^{\epsilon}(\cdot\|\cdot)$ is the hypothesis testing relative entropy \cite[Defn.~1]{salek20oneshotcapacity}, $D_\mathrm{max}^{\epsilon}(\cdot\|\cdot)$ is the smooth-max relative entropy \cite[Defn.~4]{datta09minmaxrelativeentropies}, $\hat{\bar{\rho}}^m_W$ is Willie's average output state under codebook $\mathcal{C}$ conditioned on the message $m$, and $\hat{\rho}_{XW}=\tr_{B}\left[\mathcal{N}_{XA\to XBW}(\hat{\rho}_{A})\right]$, $\hat{\rho}_{XB}=\tr_{W}\left[\mathcal{N}_{XA\to XBW}(\hat{\rho}_{A})\right]$,   $\hat{\breve{\rho}}_W = \hat{\rho}_W=\tr_{XB}\left[\mathcal{N}_{XA\to XBW}(\hat{\rho}_{A})\right]$.
    \end{lemma}
\begin{IEEEproof}
The proof follows from construction and analysis in \cite[App.~B]{wang22isittowardtdjournal}. Here, we have increased the key length in \eqref{eq:keybound1} compared to \cite[Eq.~(6)]{wang22isitcoverttd}, allowing for the strengthened secrecy bound compared to that of \cite[Lemma V.1]{wang22isitcoverttd}. 
\end{IEEEproof}

\section{Non-orthogonal Coherent POVM}\label{ap:parsevals}
Here, we modify the proof of \cite[Lemma A.0.3]{wilde16quantumit2ed} to show that it holds for non-orthogonal collections of states $\{\ket{\zeta_i}\}$ and $\{\ket{\chi_i}\}$. We use notation therein. It suffices to show that the first equality in \cite[Eq. (A.19)]{wilde16quantumit2ed} holds for these: $\frac{1}{N}\sum_s\braket{\hat{\chi}_s|\hat{\zeta}_s} = \frac{1}{N^2}\sum_{s,k,l}e^{\frac{j2\pi(l-k) s}{N}}\braket{{\chi}_k|{\zeta}_l} 
    =\frac{1}{N}\sum_i\braket{{\chi}_i|{\zeta}_i}$.

\section{Covertness Analysis for Sparse Signaling}
\label{app:sparse-signaling-covertness}
Here we adapt the approach from 
\cite{tahmasbi21signalingcovert} 
to show validity of \eqref{eq:Mehrdad-bound}. By the definition of QRE, we have:
\begin{align}
    \nonumber &\left|D\left(\hat{\rho}_{W^n} \middle\|\left(\hat{\rho}^{(0)}_W\right)^{\otimes n}\right)-D\left( \hat{\bar{\rho}}_{W}^{\otimes n}\middle\|\left(\hat{\rho}^{(0)}_W\right)^{\otimes n}\right)\right|\\
    &= \left|\left(S\left(\hat{\bar{\rho}}_{W}^{\otimes n}\right)-S\left(\hat{\rho}_{W^n} \right)\right)\right. \notag
    \\&\phantom{=|}\left.+\tr\left(\left(\hat{\bar{\rho}}_{W}^{\otimes n}-\hat{\rho}_{W^n} \right)\log\left(\hat{\rho}_{W^n}^{(0)}\right)
\right)\right|.\label{eq:qre-equality-decomposition}
\end{align}
where $S(\hat{\rho})=-\tr\left(\hat{\rho}\log\hat{\rho}\right)$ is the von Neumann entropy of quantum state $\hat{\rho}$.
We upper-bound the last two terms of \eqref{eq:qre-equality-decomposition}.
First, let $\epsilon\equiv \frac{1}{2}\left\|\left(\hat{\bar{\rho}}_{W}\right)^{\otimes n}-\hat{\rho}_{W^n} \right\|_1,$ and note that
\begin{align}
    \epsilon
    \label{eq:data_processing}
    &\leq\frac{1}{2}\sum_{\mathbf{x}}\left|\prod_{i=1}^{n}p_X(x_i)-p_{\mathbf{X}}(\mathbf{x})\right| 
    \\
    &=p\left(\bar{\mathcal{A}}\right) 
     \leq 2 e^{-\frac{1}{3} qn\vartheta^2},\label{eq:chernoff}
\end{align}
where \eqref{eq:data_processing} is by the data processing inequality with classical statistical (total variation) distance, $p\left(\bar{\mathcal{A}}\right)=1-p\left(\mathcal{A}\right)$,
and the inequality in \eqref{eq:chernoff} is the Chernoff bound.
For our setting of $q\propto 1/\sqrt{n}$, $\epsilon$ decays to zero exponentially in $\sqrt{n}$.

Let $\hat{H}_{W_i}=\hbar\omega(\hat{n}_{W_i}+\frac{1}{2})$ be the Hamiltonian for the $i^{\text{th}}$ system at Willie where $\hat{n}_{W_i}$ is the number operator, and let $
E\equiv \max\left(\sum_{i=1}^n\tr\left(\hat{\bar{\rho}}_{W}\hat{H}^{W_i}\right),\sum_{i=1}^n\tr\left(
\hat{\rho}_{W^n} \hat{H}^{W_i}\right)\right)=O(n)$,
where $f(n)=O(g(n))$ means that $f(n)$ grows no faster asymptotically than $g(n)$: $\limsup_{n\to\infty} \left|\frac{f(n)}{g(n)}\right|<\infty$. By \cite[Lemma 15]{winter16cont},
\begin{align}
        &S\left(\hat{\bar{\rho}}_{W}^{\otimes n}\right)-S\left(\hat{\rho}_{W^n} \right)\leq2\epsilon n S_{\rm max}\left(\frac{E}{\epsilon n}\right) + h(\epsilon), \label{eq:infinite-dim-fannes}
\end{align}
where $S_{\rm max}(E)< \infty$ is the maximum entropy under energy constraint $E<\infty$. \cite[Rem.~13]{winter16cont}, \cite[Prop.~1(ii)]{Shirokov06entropychar} and the fact that $\epsilon n\to 0$ implies that \eqref{eq:infinite-dim-fannes} vanishes as $n\to\infty$. 

To bound the last term in \eqref{eq:qre-equality-decomposition}, we decompose $\hat{\bar{\rho}}_{W}^{\otimes n}=p\left(\mathcal{A}\right)\hat{\rho}_{W^n} +p\left(\bar{\mathcal{A}}\right)\label{eq:prob_not_in_A}\hat{\sigma}_{W^n}$,
where $\hat{\sigma}_{W^n}$ is a density operator with $E_{\hat{\sigma}}\equiv\sum_{i=1}^n\tr\left(\hat{\sigma}_{W^n}\hat{H}_{W_i}\right)=O(n)$.
Let $\hat{\Delta}_{W^n}\equiv \hat{\rho}_{W^n} -\hat{\sigma}_{W^n}$ and note that $\tr\left(\hat{\Delta}_{W^n}\right)=0$ and $E_{\hat{\Delta}}\equiv\sum_{i=1}^n\tr\left(\hat{\Delta}_{W^n}\hat{H}_{W_i}\right)=O(n)$.
Then,
\begin{IEEEeqnarray}{rCl}
\IEEEeqnarraymulticol{3}{l}
{\tr\left(\left(\hat{\bar{\rho}}_{W}^{\otimes n}-\hat{\rho}_{W^n} \right)\log\left(\hat{\rho}_{W^n}^{(0)}
\right)\right)}\IEEEnonumber\\
&=&p\left(\bar{\mathcal{A}}\right)\tr\left(\hat{\Delta}_{W^n}\log\left(\hat{\rho}_{W^n}^{(0)}
\right)\right)\\
&=&p\left(\bar{\mathcal{A}}\right)\tr\left(\hat{\Delta}_{W^n}\sum_{i=1}^n\log\left[\frac{1}{Z(\beta(E_0))}e^{-\beta(E_0) \hat{H}_{W_i}}\right]\right)\label{eq:operatorexp}\IEEEeqnarraynumspace\\
&=&p\left(\bar{\mathcal{A}}\right)n\log\left(\frac{1}{Z(\beta(E_0))}\right)\tr\left(\hat{\Delta}_{W^n}\right) -p\left(\bar{\mathcal{A}}\right)\beta(E_0)E_{\hat{\Delta}}\notag\\
&=&o(1)\label{eq:qre_equality_fourth_term},\
\end{IEEEeqnarray}
where $\eqref{eq:operatorexp}$ is due to operator exponentiation and that $\hat{\rho}_W^{(0)}=\frac{1}{Z(\beta(E_0))}e^{-\beta(E_0) \hat{H}_W}$ is a thermal state with $\hat{H}_W=\hat{n}_w$, and \eqref{eq:qre_equality_fourth_term} follows from \eqref{eq:chernoff}.

\section{Pauli Twirling Channel Parameters} \label{appendix:pauli-twirling} 
Alice implements Pauli twirling by choosing a random Pauli operator and applying it to the qubit state $\hat{\rho}_A$ before transmission. Bob applies the same operator to ``undo'' the twirling at his output, yielding: $\hat{\rho}_B = \frac{1}{4}\sum_{\hat{P}\in\mathcal{Q}}\hat{P}\mathcal{N}(\hat{P}\hat{\rho}_A\hat{P})\hat{P}$, 
where $\mathcal{Q}\equiv\{\hat{I},\hat{X},\hat{Y},\hat{Z}\}$ is the Pauli operator set.

Given a channel with Kraus operators $\hat{K}_j$ indexed by $i$, the Choi state of the channel is $\hat{\rho}^\text{Choi} = \sum_{i=0}^\infty (\hat{I}\otimes \hat{K}_i)\ket{\Phi^+}\bra{\Phi^+}(\hat{I}\otimes \hat{K}_i)$,
and the Pauli channel parameters generated by Pauli twirling in terms of the Choi state are:
\begin{align}
    p_I &= \bra{\Phi^+} \hat{\rho}^\text{Choi} \ket{\Phi^+} \label{eq:pauli-I} , 
    p_X = \bra{\Psi^+} \hat{\rho}^\text{Choi} \ket{\Psi^+} , \\
    p_Y &= \bra{\Psi^-} \hat{\rho}^\text{Choi} \ket{\Psi^-} ,
    p_Z = \bra{\Phi^-} \hat{\rho}^\text{Choi} \ket{\Phi^-} ,\label{eq:pauli-Z}
\end{align}
where $\ket{\Phi^{\pm}}\equiv\frac{1}{\sqrt{2}}(\ket{00}\pm\ket{11})$ and $\ket{\Psi^{\pm}}\equiv\frac{1}{\sqrt{2}}(\ket{01}\pm\ket{10})$ are the Bell states.

Let us now apply this to the lossy thermal-noise channel followed by projection. The lossy thermal-noise channel can be decomposed into a pure-loss channel of transmissivity $\tau=\eta/G$ followed by a quantum-limited amplifier with gain coefficient $G=1+(1-\eta)\nb$. As we are primarily concerned with the Choi state, the decomposition has the following Kraus operator representation with input state $\hat{\rho}_{RA}=\ketbra{\Phi^+}{\Phi^+}_{RA}$: $\mathcal{E}^{\eta,\nb}_{A\to B}(\hat{\rho}_{RA}) = \sum_{k,l} \hat{B}_k \hat{A}_l \hat{\rho}_{RA}\hat{A}_l^\dagger \hat{B}_k^\dagger$
where $\hat{A}_l=\hat{I}_R\otimes \sqrt{\frac{(1-\tau)^l}{l!}} \tau^{\frac{\hat{n}}{2}} \hat{a}^l$ and 
$\hat{B}_k =\hat{I}_R\otimes \sqrt{\frac{1}{k!} \frac{1}{G}\left(\frac{G-1}{G}\right)^k} \hat{a}^{\dagger k} G^{-\frac{\hat{n}}{2}}$.
 $\hat{a}, \hat{a}^\dagger$ are the annihilation and creation operators for system $A$, and $\hat{n}=\hat{a}^\dagger\hat{a}$ is the number operator.

Furthermore, Bob projects his physical channel output state onto the computational basis, where the projection operator is $\hat{\Pi}=\hat{I}_R\otimes\ketbra{0}{0}_B+\hat{I}\otimes\ketbra{1}{1}_B$. This has failure probability $\pfail$. When Bob fails to project, he replaces the state with the maximally mixed state $\hat{\pi}$.
The total action of the physical channel and projection yields:
\begin{align}
    \hat{\rho}_{RB}=(1-\pfail)\hat{\Pi}\left(\sum_{k=0}^\infty\sum_{l=0}^\infty \hat{B}_k \hat{A}_l \hat{\rho}_{RA}\hat{A}_l^\dagger \hat{B}_k^\dagger\right)\hat{\Pi}^\dagger + \pfail\hat{\pi}_{RB} \label{eq:rhob-choi}.
\end{align}
By definition, $\hat{\rho}_{RB}$ in \eqref{eq:rhob-choi} defines the Choi state for our channel with projection when $\hat{\rho}_{RA} = \hat{\rho}^{(\Phi^+)}_{RA}=\ket{\Phi^+}\bra{\Phi^+}_{RA}$. Since twirling $\hat{\pi}_{RB}$ yields the same state, we only need to determine the parameters for the state $\hat{\rho}_{RB}^{\textrm{Choi}}$ conditioned on projection success. Let us further simplify \eqref{eq:rhob-choi}. The system's initial state in the Bell basis limits $l$'s range to $\{0,1\}$, since the annihilation operator applied more than once will yield the vacuum state. Furthermore, the projection operator $\hat{\Pi}$ only selects states within the Bell basis, indicating that the creation operator from $\hat{B}_k$ can only support $k=0,1$. Hence,
\begin{align}
   \hat{\rho}_{RB}^{\textrm{Choi}} &=  \hat{\Pi}\left(\sum_{k,l\in\{0,1\}}\hat{B}_k \hat{A}_l \hat{\rho}^{(\Phi^+)}_{RA}\hat{A}_l^\dagger \hat{B}_k^\dagger\right)\hat{\Pi}^\dagger \\
    &=\frac{1}{2 N(G,\tau)}\left[ \frac{1}{G}\ketbra{00}{00}+\frac{G-1}{G^2}\ketbra{01}{01}\notag \right . \\
    &\phantom{=}+\frac{1-\tau}{G}\ketbra{10}{10}+\frac{G+\tau-1}{G^2}\ketbra{11}{11} \notag \\
    &\phantom{=}+\left. \frac{\sqrt{\tau}}{G^{\frac{3}{2}}}\left(\ketbra{00}{11}+\ketbra{11}{00} \right) \right], \label{eq:rhob-choi2}
\end{align}
where $ N(G, \tau) = \frac{G(2-\tau)+\tau-1}{2G^2}$ is a normalization term.
Note that terms containing bi-photon states $\ket{12}$ are removed by the projection. Evaluating \eqref{eq:pauli-I} and \eqref{eq:pauli-Z} for \eqref{eq:rhob-choi2} yields the twirling parameters in \eqref{eq:qtw-I}-\eqref{eq:qtw-Z}.

\section{\texorpdfstring{$\chi^2$}{chi-squared}-divergence Derivation for Single-rail Qubits}\label{app:chi-square}
We first derive the output state at Willie to calculate the $\chi^2$-divergence between Willie's observed innocent state and when Alice transmits. In the single-rail case, Alice inputs an arbitrary qubit into the channel of the form $\hat{\rho}_A = \begin{pmatrix} |\alpha|^2 & \gamma \\ \gamma^*& |\beta|^2 \end{pmatrix}$.

Utilizing anti-normally ordered characteristic functions,
\begin{align}
    \chi_A^{\hat{\rho}_W}(\zeta) &= \chi_A^{\hat{\rho}_{A}}\left(\sqrt{1-\eta}\zeta\right) \chi_A^{\hat{\rho}_E}\left(-\sqrt{\eta}\zeta\right),
\end{align}
where $\hat{\rho}_W^{(\psi)}$ is the state at Willie, $\hat{\rho}_E$ is the environment's thermal state input. For a thermal state with mean thermal photon number $\nb$, the characteristic function is known  \cite[Sec.~7.4.3.2]{orszag16quantumotpics}:
\begin{align}
\chi_A^{\hat{\rho}_E}(-\sqrt{\eta}\zeta) = e^{-(1+\nb)\eta|\zeta|^2},
\end{align}

The next step is finding the characteristic function for $\hat{\rho}_A$.
\begin{align}
    &\chi_A^{\hat{\rho}_A}\left(\sqrt{1-\eta}\zeta\right)  \notag \\
    &= e^{-(1-\eta)|\zeta|^2} \tr\left(\hat{\rho}_A e^{\zeta \sqrt{1-\eta} \hat{a}^\dagger} e^{-\zeta^* \sqrt{1-\eta} \hat{a}}  \right) \label{eq:BCHtheorem}\\
    &=e^{-(1-\eta)|\zeta|^2}\sum_{n=0}^\infty \bra{n} \left( |\alpha|^2 \ketbra{0}{0}+\gamma \ketbra{0}{1}
+\gamma^*\ketbra{1}{0} \notag \right. \\
    &\phantom{e^{-(1-\eta)|\zeta|^2}\sum_{n=0}^\infty}\left .+\,|\beta|^2\ketbra{1}{1} \right) e^{\zeta \sqrt{1-\eta} \hat{a}^\dagger} e^{-\zeta^* \sqrt{1-\eta} \hat{a}}  |n\rangle. \label{eq:char-a}
\end{align}
We  evaluate \eqref{eq:char-a} term by term, noting that the infinite sums drop due to the orthogonality of photon number states. Let us first evaluate the $|\alpha|^2$-term:
\begin{align}
    &|\alpha|^2\bra{0} e^{\zeta \sqrt{1-\eta} \hat{a}^\dagger} e^{-\zeta^* \sqrt{1-\eta} \hat{a}} \ket{0} \\
    &=|\alpha|^2\bra{0}\left(\sum_{k^{\prime}=0}^n(\hat{a}^\dagger)^{k^\prime}\frac{(\sqrt{1-\eta} \zeta)^{k^{\prime}}}{k^\prime!} \right) \notag \\
    &\phantom{=}\quad\times\left(\sum_{k=0}^n\frac{\left(-\sqrt{1-\eta} \zeta^*\right)^k}{k!} (\hat{a})^k\right) \ket{0} \\
    &= |\alpha|^2, \label{eq:alpha-eval}
\end{align}
where $\eqref{eq:alpha-eval}$ due to the fact that the summands are not zero only when the ladder operators are not applied, i.e., $k^\prime = k = 0$.

Evaluation of the $|\beta|^2$ term follows that of the $|\alpha|^2$ term:
\begin{align}
    &|\beta|^2\bra{1} e^{\zeta \sqrt{1-\eta} \hat{a}^\dagger} e^{-\zeta^* \sqrt{1-\eta} \hat{a}} \ket{1} \\
    &= |\beta|^2 \left(1 - |\zeta|^2(1-\eta)\right). \label{eq:beta-eval}
\end{align}
Furthermore, 
\begin{align}
    \gamma\bra{1} e^{\zeta \sqrt{1-\eta} \hat{a}^\dagger} e^{-\zeta^* \sqrt{1-\eta} \hat{a}} \ket{0} &=  \gamma \zeta\sqrt{1-\eta} \label{eq:gamma-eval} \\
    -\gamma^*\bra{0} e^{\zeta \sqrt{1-\eta} \hat{a}^\dagger} e^{-\zeta^* \sqrt{1-\eta} \hat{a}} \ket{1} &= -\gamma^* \zeta^*\sqrt{1-\eta} \label{eq:gamma-star-eval}
\end{align}
Combining these terms,  \eqref{eq:char-a} evaluates to:
\begin{align}
 &\chi_A^{\hat{\rho}_A}\left(\sqrt{1-\eta}\zeta\right)  \notag \\
    &=
e^{-(1-\eta)|\zeta|^2}\left(|\alpha|^2+ |\beta|^2 \left(1 - |\zeta|^2(1-\eta)\right)\vphantom{\gamma \zeta\sqrt{1-\eta} -\gamma^* \zeta^*\sqrt{1-\eta}} \right. \notag \\
&\phantom{=}+ \left.\gamma \zeta\sqrt{1-\eta} -\gamma^* \zeta^*\sqrt{1-\eta}  \right).
\end{align}

A quantum state $\hat{\rho}_W$ describes Willie's observation. It is related to its characteristic function $\chi_A^{\hat{\rho}_W}(\cdot)$ via the operator Fourier transform \cite{weedbrook12gaussianQIrmp}:
\begin{align}
    \hat{\rho}_W^{(\psi)} &= \int \frac{d^2\zeta}{\pi}\chi_A^{\hat{\rho}_W}(\zeta) e^{\zeta\hat{w}^\dagger} e^{-\zeta^*\hat{w}}, \label{eq:fourier-transform}
\end{align}
where the integral is over the complex plane for $\zeta$. Converting to polar coordinates using $|\zeta|^2 = r^2$, $\zeta = r(\cos(\theta)+\im \sin(\theta))$, $\zeta^* = r(\cos(\theta)-\im \sin(\theta))$, and complex unit $\im\equiv\sqrt{-1}$ yields:
\begin{align}
    \hat{\rho}_W^{(\psi)} &= \int_{r=0}^\infty r dr\int_{\theta=0}^{2\pi} \frac{d\theta}{\pi}\chi_A^{\hat{\rho}_W} (r,\theta)e^{r\left(\cos(\theta)+\im \sin(\theta)\right)\hat{w}^\dagger} \notag \\
    &\phantom{=}\qquad\qquad\qquad\times e^{-r(\cos(\theta)-\im \sin(\theta))\hat{w}}. 
\end{align}
Likewise, $\chi_A^{\hat{\rho}_{W^{(\psi)}}}$ in polar coordinates is:
\begin{align}
    \chi_A^{\hat{\rho}_{W^{(\psi)}}}(r, \theta) &= e^{-(1+\eta\nb)\eta r^2} \notag\\ 
    &\phantom{=}\times\left( 1 - |\beta|^2r^2(1-\eta) \right. \notag \\
    &\phantom{=\times}+ r\sqrt{1-\eta}(\gamma (\cos(\theta)+\im \sin(\theta))) \notag \\
    &\phantom{=\times}\left.-r\sqrt{1-\eta}(\gamma^* (\cos(\theta)-\im \sin(\theta))) \right),
\end{align}
where $|\alpha|^2+|\beta|^2=1$ removes $\alpha$.
Furthermore, a generalized single-mode state in the photon number basis is:
\begin{align}
    \hat{\rho}_W^{(\psi)} = \sum_{m=0}^\infty\sum_{n=0}^\infty \bra{m}\hat{\rho}_W^{(\psi)} \ket{n}\ket{m}\bra{n}, \label{eq:gen-state}
\end{align}
where we replace the right-hand-side $\hat{\rho}_W^{(\psi)}$ with that of \eqref{eq:fourier-transform} and evaluate the integrals to get the following state for $\hat{\rho}_W^{(\psi)}$:
\begin{align}
    \hat{\rho}_W^{(\psi)} &= \sum_{m=0}^\infty \left[\frac{(\eta\nb)^m}{(1+\eta\nb)^{m+1}} \notag \right.\\
    &\phantom{=}+\left. |\beta|^2(1-\eta)\frac{(\eta\nb)^{m-1}(m-\eta\nb)}{(1+\eta\nb)^{m+2}}\right ] \ketbra{m}{m} \notag \\
    &\phantom{=}-\gamma  \sqrt{(1-\eta)(m+1)}\frac{(\eta\nb)^m}{(1+\eta\nb)^{m+2}} \ketbra{m}{m+1} \notag \\
    &\phantom{=}-\gamma^*  \sqrt{(1-\eta)m}\frac{(\eta\nb)^{m-1}}{(1+\eta\nb)^{m+1}}\ketbra{m}{m-1}. \label{eq:rhow}
\end{align}

As Alice applies Pauli twirling, Willie observes, on average, a maximally mixed state with $\alpha = \beta = 1/\sqrt{2}$ and $\gamma=\gamma^*=0$. Then $\eqref{eq:rhow}$ yields $\hat{\rho}_W$ used in the proof of Lemma~\ref{lemma:sr-no-locc}:
\begin{align}
\hat{\rho}_W &= \sum_{m=0}^\infty \left[\frac{(\eta\nb)^m}{(1+\eta\nb)^{m+1}} \right. \notag \\
&\phantom{=}\left. + \frac{(1-\eta)}{2}\frac{(\eta\nb)^{m-1}(m-\eta\nb)}{(1+\eta\nb)^{m+2}}\right ] \ketbra{m}{m}. \label{eq:willies-output-state}
\end{align}
Willie's output state when Alice is quiet is an attenuated thermal state $\hat{\rho}_W^{(0)} = \hat{\rho}_{\rm th}(\eta\nb)$ defined in \ref{sec:systemmodel}.

Since both $\hat{\rho}_W$ and $\hat{\rho}_W^{(0)}$ are diagonal in the photon number basis, calculation of $D_{\chi^2}\left(\hat{\rho}_W\middle\|\hat{\rho}_W^{(0)}\right) = \operatorname{tr}[(\hat{\rho}_W)^2(\hat{\rho}_W^{(0)})^{-1}] - 1$ is straightforward. We square the coefficients of $\hat{\rho}_W$ and take the reciprocal of $\hat{\rho}_W^{(0)}$'s coefficients, and multiply them term by term. Taking the infinite sum of the elements, and using known identities from \cite{gr07tables}, yields $D_{\chi^2}\left(\hat{\rho}_W\middle\|\hat{\rho}_W^{(0)}\right) = \frac{(1-\eta)^2}{4\eta\nb(1+\eta\nb)}$.

\bibliographystyle{IEEEtran}
\bibliography{papers}

\end{document}